\documentclass[journal]{vgtc}                %
\ifpdf%
  \pdfoutput=1\relax                   %
  \pdfcompresslevel=9                  %
  \pdfoptionpdfminorversion=7          %
  \ExecuteOptions{pdftex}
  \usepackage{graphicx}                %
  \DeclareGraphicsExtensions{.pdf,.png,.jpg,.jpeg} %
\else%
  \ExecuteOptions{dvips}
  \usepackage{graphicx}                %
  \DeclareGraphicsExtensions{.eps}     %
\fi%

\graphicspath{{figures/}{pictures/}{images/}{./}} %

\usepackage{microtype}                 %
\PassOptionsToPackage{warn}{textcomp}  %
\usepackage{textcomp}                  %
\usepackage{mathptmx}                  %
\usepackage{times}                     %
\usepackage{cite}                      %
\usepackage{tabu}                      %
\usepackage{booktabs}                  %

\usepackage{algorithm}
\usepackage[noend]{algpseudocode}

\onlineid{1233}

\vgtccategory{Data Transformations}

\usepackage{amssymb,amsmath,amsthm,bbm}
\usepackage{wasysym}
\usepackage{mathrsfs}

\newcommand{\R}{\mathbb{R}}

\newcommand{\X}{{\mathbb X}}

\newcommand{\BB}{\mathcal{B}}

\newcommand{\MM}{\mathcal{M}}

\newcommand{\Dgm}{\mathrm{Dgm}}

\newcommand{\e}{\varepsilon}
\renewcommand{\phi}{\varphi}

\newcommand{\dLimit}{\ensuremath{\mathsf{L}}}
\newcommand{\spread}{\ensuremath{\mathsf{S}}}

\newcommand{\dgmf}{\ensuremath{\mathrm{Dgm}({\mathcal{M}_f)}}}
\newcommand{\dgmg}{\ensuremath{\mathrm{Dgm}({\mathcal{M}_g)}}}

\newtheorem{theorem}[equation]{Theorem}
\newtheorem{corollary}[equation]{Corollary}

\newtheorem{proposition}[equation]{Proposition}
\newtheorem{definition}[equation]{Definition}

\title{Computing a Stable Distance on Merge Trees}

\author{Brian Bollen, Pasindu Tennakoon, and Joshua A. Levine}
\authorfooter{
\item
 Brian Bollen is with the Department of Mathematics at The University of Arizona. E-mail: bbollen23@math.arizona.edu.
\item
 Pasindu Tennakoon is with the Department of Computer Science at The University of Arizona. E-mail: pasindut@cs.arizona.edu.
\item
 Joshua A. Levine is with the Department of Computer Science at The University of Arizona. E-mail: josh@cs.arizona.edu.
}

\shortauthortitle{Bollen \MakeLowercase{\textit{et al.}}: Computing a Stable Distance on Merge Trees}
\abstract{
Distances on merge trees facilitate visual comparison of collections of scalar fields. Two desirable properties for these distances to exhibit are 1) the ability to discern between scalar fields which other, less complex topological summaries cannot and 2) to still be robust to perturbations in the dataset. The combination of these two properties, known respectively as stability and discriminativity, has led to theoretical distances which are either thought to be or shown to be computationally complex and thus their implementations have been scarce. In order to design similarity measures on merge trees which are computationally feasible for more complex merge trees, many researchers have elected to loosen the restrictions on at least one of these two properties. The question still remains, however, if there are practical situations where trading these desirable properties is necessary. Here we construct a distance between merge trees which is designed to retain both discriminativity and stability. While our approach can be expensive for large merge trees, we illustrate its use in a setting where the number of nodes is small. This setting can be made more practical since we also provide a proof that persistence simplification increases the outputted distance by at most half of the simplified value. We demonstrate our distance measure on applications in shape comparison and on detection of periodicity in the von Kármán vortex street.
} %

\keywords{Merge trees, scalar fields, distance measure, stability, edit distance, persistence}

\CCScatlist{ %
 \CCScat{K.6.1}{Management of Computing and Information Systems}%
{Project and People Management}{Life Cycle};
 \CCScat{K.7.m}{The Computing Profession}{Miscellaneous}{Ethics}
}

\teaser{
  \centering
  \includegraphics[width=0.98\linewidth]{}
  \caption{Starting with a baseline scalar field and merge tree, a small perturbation may change the topology of a merge tree. The top graph plots the bottleneck, merge tree matching, and $L^{\infty}$ distance between the baseline and 36 perturbed scalar fields which all exhibit a horizontal instability. The merge tree matching distance is shown here to lie between the bottleneck distance and the $L^{\infty}$ distance even when faced with these instabilities.}
  \label{fig:teaser}
}

\vgtcinsertpkg

\begin{document}

\firstsection{Introduction}
\maketitle
\label{sec:introduction}

Topological descriptors in topological data analysis (TDA) have been used extensively to identify and summarize features of interest in scalar fields in a wide variety of domains such as nuclear energy~\cite{duke2012visualizing}, turbulent mixing~\cite{laney_vis06}, shape analysis~\cite{Hilaga2001}, porous materials~\cite{Gyulassy2007Topologically}, combustion~\cite{bremer2010interactive}, and chemistry~\cite{gunther2014characterizing}. Most topological descriptors fall into one of three categories: the set-based descriptors, such as the persistence diagram \cite{EdelsbrunnerLZ02,Steiner2009}; the graph-based descriptors such as the Reeb graph~\cite{reeb46,pascucci07,tierny_vis09}, contour tree~\cite{Carr2003}, and merge tree; and the complex-based descriptor such as the Morse-Smale complex \cite{edelsbrunner03b,gyulassy_vis08}.

In a visualization setting, we are often posed with the question of how similar two datasets are to one another. Since these topological descriptors have been used for individual analysis of features in the dataset, we can use measures of similarity between the topological descriptors to produce a similarity measure between the underlying datasets~\cite{Yan2021}. For instance, on persistence diagrams, distances such as the bottleneck distance and Wasserstein distance have been used effectively to gauge this similarity. Graph-based structures have also seen a wide variety of distances such as the interleaving \cite{Morozov2013,deSilva2016}, functional distortion \cite{Bauer2014}, universal \cite{Bauer2020}, and multiple edit distances \cite{DiFabio2012,DiFabio2016, Bauer2016}. These graph-based and set-set based descriptor distances have all been proven to be \textbf{stable} -- a property which indicates that the distance is robust to perturbations of the underlying dataset. Furthermore, the graph-based distances have all been shown to be more \textbf{discriminative} than the bottleneck distance, i.e. they can discern between differences in the datasets which the bottleneck distance may not. The combination of retaining stability and discriminativity to the bottleneck distance makes graph-based distances desirable.

However, these theoretical distances all have related problems which imply that these distances are computationally complex. For example, the functional distortion distance is a version of the Gromov-Hausdorff distance \cite{Gromov1981} which is known to be $\mathsf{NP}$-hard to approximate within a factor of 3 \cite{Agarwal2018}. Similarly, the Reeb graph edit distance is heavily related to the graph edit distance (GED) which is known to be $\mathsf{NP}$-hard \cite{garey1979computers,Zhiping2009} and determining if two Reeb graphs have an interleaving distance of $\e$ is known to be $\mathsf{NP}$ \cite{deSilva2016}.

In order to construct similarity measures which are computationally feasible, researchers have constructed new distances on the graph-based descriptors (most notably the merge tree) which have ultimately loosened the restrictions on stability while attempting to instead only maintain discriminativity to the bottleneck distance \cite{Saikia2014, Sridharamurthy2018, Beketayev2014}. These distances sacrifice theoretical properties for computational feasibility. In this work, we ask if it is possible to construct a distance on merge trees that is still practical to use despite bounded computational complexity. We couple this with an approximation bound on this distance based on persistence simplification.

\subsection{Contributions}
We choose to focus our efforts on constructing a distance which can experimentally be shown to retain stability \textit{and} discriminativity. To the best of our knowledge, this is the first distance with an implementation which is shown to be both discriminative and stable. Instead of designing a distance for all graph-based descriptors, we follow the lead of several other experimental distances \cite{Sridharamurthy2018,Saikia2017,Beketayev2014} and focus on the simplest -- the merge tree. Our algorithm matches the features of one merge tree to another and computes a cost of this matching which is heavily inspired by the universal distance \cite{Bauer2021}. 

More specifically, this work will contribute the following:
\begin{itemize}
    \item Define an extended semipseudometric on merge trees by first encoding the features of the merge tree using branch decomposition trees;
    \item Construct an algorithm for this distance which utilizes the A*-search algorithm to find a matching between vertices of two branch decomposition trees;
    \item Prove that persistence simplification of the dataset increases our distance by at most half the simplified value -- allowing us to move larger datasets into more practical settings;
    \item Experimentally show that this distance is stable  and more discriminative than the bottleneck distance while still retaining a similar ``largest feature difference" approach to similarity measuring;
    \item Show the usefulness of stable, discriminative distances on several datasets.
\end{itemize}

\section{Related Work}

\subsection{Graph-based Topological Descriptors}

Graph-based topological descriptors include merge trees (sometimes specifically referred to as split trees or join trees), Reeb graphs, contour trees, and mapper graphs. Each descriptor is designed to show the changes of the topological structure in the underlying dataset. Reeb graphs are, arguably, the most complex descriptor in this family.  Reeb graphs contract each component of each level set into a single point. The contour tree is simply the Reeb graph defined on a simply connected domain -- making the contour tree a well-defined tree rather than a directed multigraph. Merge trees are then the simplest (both in structure and in computational cost) of these in that it encodes the sublevel (or superlevel) set topology rather than the levelset topology.   

The visualization community has a long history of providing effective computations of these descriptors as well as using them for data analysis.  Heine et al.~recently surveyed many of their uses~\cite{Heine2016survey}.  In this section, we highlight some of the more recent works as they relate to applications of level set topology, rather than providing an exhaustive survey. Oesterling et al.~construct topological landscapes of high-dimensional point clouds using join trees~\cite{oesterling2011visualization}.  Bremer et al.~capture the behavior of turbulent mixing by developing hierarchical techniques for merge trees~\cite{bremer2010interactive}.  Thomas et al. explore symmetry detecting using contour trees~\cite{thomas2011symmetry}.  Widanagamaachchi et al.~study atmospheric phenomena by constructing a tracking on merge trees~\cite{widanagamaachchi2017exploring}.  Yan et al.~compute a structural average of merge trees for understanding statistical properties of collections~\cite{Yan2019}.

\subsection{Distances on Merge Trees}

Stability of merge trees was proven when the interleaving distance between merge trees was introduced \cite{Morozov2013}. Afterwards, functional distortion distance was introduced for Reeb graphs \cite{Bauer2014}, the interleaving distance was extended to Reeb graphs \cite{deSilva2016},  and several edit distances were introduced for Reeb graphs \cite{DiFabio2012,DiFabio2016,Bauer2020}. While the Reeb graphs are inherently different summaries of the scalar field, merge trees are still a 1-dimensional graph and thus many of the definitions introduced in these works can be applied directly to merge trees. Researchers have actually shown the equivalence of interleaving, functional distortion, and the universal distance on merge trees \cite{Bauer2021}. Each of these aforementioned distances have been proven to be both stable and discriminative to the bottleneck distance \cite{deSilva2016,DiFabio2012,DiFabio2016,Bollen2021,Bauer2020,Morozov2013,Bauer2014}.

Unfortunately, implementations of these distances have been scarce due to their computational complexity. In order to have distances which are practical, other researchers have focused their efforts on defining distances specifically on merge trees due to their simplicity. As stated before, these distances loosen the restriction on either stability or discriminativity in order to have distances which are computationally feasible. To avoid confusion, we will call the collection of distances consisting of the interleaving, functional distortion, and universal distance as the \textit{theoretical merge tree distances}. We call the collection of distinct distances we discuss below the \textit{experimental merge tree distances}.

Sridharamurthy et al. introduced an edit distance between merge trees which is experimentally shown to be more discriminative than both the bottleneck and 1-Wasserstein distance \cite{Sridharamurthy2018}. This distance loosens the restriction on stability which makes it computationally feasible. Cases of instability are still addressed by introducing an adjustable parameter which combines saddles which are within the parameters value -- simplifying the topology of the merge tree. The distance was also proven to be a well-defined metric on the space of merge trees. This idea was later expanded upon with the introduction of the local merge tree edit distance -- a well-defined metric specifically designed to study the local similarities at multiple resolutions rather than providing a global measure \cite{Sridharamurthy2021}.

Beketayev et al. provides a computation of a similarity measure for merge trees by first computing all of its branch decomposition trees -- data structures which encode features of the merge tree as nodes in a new tree -- and then finding pairwise matchings between these trees. The matching imposes a restriction that if $x$ matches to $y$ and $x'$ is a child of $x$, then $x'$ must be matched to a child of $y$ (or be deleted). This is similar to the `ancestor preserving' restriction imposed by standard tree edit distance (TED) \cite{KuoChung1979}. Our distance that we propose in \autoref{sec:distance} similarly uses branch decomposition trees in order to encode the feature of a merge tree. We divert from this work by removing the `ancestor preserving' restriction which allows us to maintain stability of our distance. 

Saikia et al. \cite{Saikia2014,Saikia2017} produces a similar distance to the one defined by Beketayev et al. They introduce a dynamic programming algorithm to create an \textit{extended branch decomposition tree} -- a data structure which encodes similar data to the conglomerate of all possible branch decomposition trees without having to store all these possibilities in memory. They show the application of this distance on self-similarity of scalar fields and detecting periodicity in time-varying datasets.

\section{Technical Background}
\label{sec:background}

\subsection{Scalar Fields and Merge Trees}

To ensure that our resulting structures are well-behaved, we elect to focus our attention towards piecewise linear-scalar fields: scalar fields in which the domain $\X$ is triangulable and the function $f$ is piecewise linear. Furthermore, we will focus our work on scalar fields in which the domain is a simply connected, two-dimensional manifold and the function $f$ is a simple Morse function \cite{CompTop2010}. These conditions ensure that the merge tree is a well-defined, one-dimensional graph \cite{deSilva2016}.

\begin{definition}
A \textbf{scalar field} (equivalently an \textbf{$\R$-space}) is a pair $(\X,f)$ where $\X$ is topological space and $f:\X \to \R$ is a continuous real-valued function.
\end{definition}

\begin{definition}
A \textbf{sublevel set} of $\X$ at $a\in \R$, denoted as $\X_a$ is the pre-image of the set $(-\infty,a]$ under $f$. Similarly, a \textbf{superlevel set} of $\X$ at $a$ is $f^{-1}[a,\infty)$ and is denoted as $\X^a$.
\end{definition}

\begin{definition}\label{def:mergeTree}
We define an equivalence relation $\sim_f$ on $\X$ by stating that $x \sim_f y$ if $x,y \in \X^a$ and $x$ and $y$ both lie in the same connected component of the superlevel set. We define $\X_f$ to be the quotient space $\X / \sim_f$ and define $\tilde{f}:\X_f \to \R$ to be the restriction of $f$ to the domain $\X_f$. The pair $\mathcal{S}_f := (\X_f,\tilde{f})$ is called the \textbf{split tree} of $(\X,f)$. The \textbf{join} tree $\mathcal{J}_f$ is defined analogously using sublevel sets rather than superlevel sets. The split and join tree make up the class of \textbf{merge trees}. We denote a general merge tree of the scalar field $(\X,f)$ as $\MM_f$.
\end{definition}

In what follows, we will be \textbf{working solely with the split tree}. Some definitions, theorems, and parts of our algorithm work for both the join and split tree, while others are specific to the split tree due to aspects such as increasing paths from saddle to extrema rather than decreasing paths. However, if we were to negate the original function defined on the scalar field, we can provide a distance for the join tree as well. To this end, we will elect to use the term \textbf{merge tree} and use the notation $\MM_f$ for a merge tree defined on a scalar field $(\X,f)$.

Since merge trees can be considered as labeled graphs, we will often denote the vertices and edges of $\MM_f$ as $V(\MM_f)$ and $E(\MM_f)$, respectively. From this, we can define the notion of \textbf{merge tree isomorphism}.

\begin{definition}
Two merge trees $\MM_f$ and $\MM_g$ are isomorphic if there exists a bijection $\alpha: V(\MM_f) \to V(\MM_g)$ such that 1) the edge $e(u,u') \in E(\MM_f)$ if and only if $e(\alpha(u),\alpha(u')) \in E(\MM_g)$ and 2) for every $u \in \MM_f$, we have $f(u) = g(\alpha(u))$. 
\end{definition}

\subsection{Persistence Diagrams and Bottleneck Distance}

Instead of defining the persistence diagram on the scalar field, we elect to define the persistence diagram by using using the merge tree as a scalar field itself since it reduces the number of classes of points in the persistence diagram and overall makes the comparison between distances on persistence diagrams and distances on merge trees simpler; see Bollen et al.\cite{Bollen2021} for a discussion on defining the persistence diagram of graph-based descriptors. For the sake of brevity, we show how to construct a persistence diagram from a merge tree and its properties rather than theoretical definitions.

The persistence diagram $\dgmf$ of a merge tree $\MM_f$ is a multiset of points $(a,b)$ which each represent a pair of vertices of the merge tree. These pairs intuitively represent different features of the merge tree. To determine which vertices are paired together, we introduce the \textbf{elder rule}.

\begin{definition}
The \textbf{elder rule} is a pairing scheme between saddles and extrema of a merge tree that says $x$ is paired with $y$ if there exists a monotone increasing path from $y$ to $x$ and if for all saddles $y'$ on the same path with $f(y') > f(y)$, they are paired with an extrema $x'$ such that $f(x') < f(x)$. The \textbf{persistence diagram} is a multiset $\dgmf$ where $(f(x),f(y)) \in \dgmf$ if $x,y$ are a saddle extrema pair based on the elder rule with the addition of the pair $(f(g_1),f(g_2))$ where $g_1$ is the global minimum and $g_2$ is the global maxima.
\end{definition}

\begin{figure}
    \centering
    \includegraphics[width=0.45\textwidth]{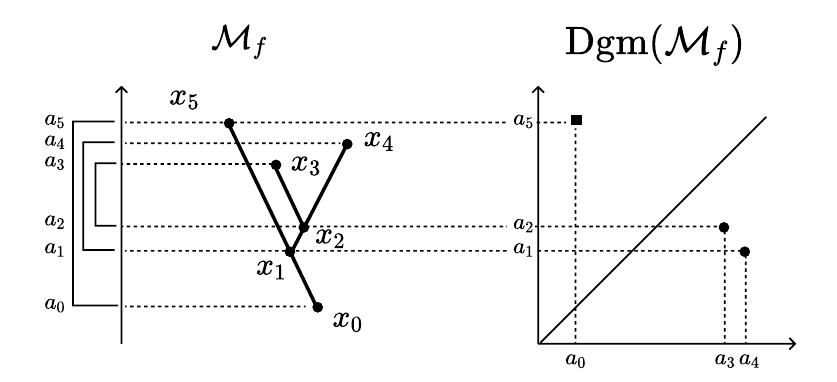}
    \caption{A merge tree $\MM_f$ with its accompanying persistence diagram $\Dgm(\MM_f)$. The square point of $\Dgm(\MM_f)$ represents the pairing of the global min and global max.}
    \label{fig:persDgm}
\end{figure}

Persistence diagrams have been shown to be stable under the well-studied bottleneck distance \cite{Steiner2005,Steiner2009}.
The bottleneck distance assign a \textbf{cost} to a matching between the points of two persistence diagrams. We allow for each point to also be matched to an \textbf{empty node} which can be thought of as deleting or inserting that feature.

\begin{definition}[\textbf{Bottleneck Distance}]
\label{def:bottleneckDistance}
Let $D_1,D_2$ be two persistence diagrams and let $\lambda$ denote an \textbf{empty node}. We define $\bar{D_i}:=D_i\cup\{\lambda\}$. A \textbf{matching} $M$ between $D_1$ and $D_2$ is a binary relation $M \subseteq \bar{D}_1 \times \bar{D}_2$ such that each element from $D_1$ and $D_2$ appear in exactly one pair $(x,y) \in M$.

The \textbf{cost} of a pair $(x,y)\in M$ is defined as
\[c(x,y) = \begin{cases} 
      \max\{|x_1-y_1|,|x_2-y_2|\} & x\in D_1,y\in D_2 \\
      \frac12|x_1-x_2| & x\in D_1, y = \lambda \\
      \frac12|y_1-y_2| & x = \lambda, y \in D_2 
   \end{cases}
\]
The \textbf{cost of a matching $M$}, denoted as $c(M)$, is then the largest cost of all pairs in the matching.
\end{definition}

\subsection{Branch Decomposition Trees}
The branch decomposition tree (BDT) is a data structure which, in topological data analysis, attempts to pair the saddles of contour trees or merge trees to the extrema of that tree. Each node in the BDT would then represent a \textit{feature} of the original scalar field. Pascucci used the branch decomposition trees to inform a layout for complex contour trees with many self-intersections~\cite{pascucci2009toporrery}. Since then, BDTs have seen additional use as representations of merge trees for their comparison \cite{Saikia2017,Beketayev2014}.

Each merge tree or contour tree has precisely $2^{\frac{n}{2}-1}$ different possible BDTs, where $n$ is the number of nodes \cite{Beketayev2014}. Often, a unique BDT is constructed by weighting the choice of pairing based on a particular measurement -- such as persistence of the branch or the number of voxels of the branch in the scalar field \cite{Saikia2017}.

\begin{definition}
A \textbf{branch} is a monotone (in function value) path traversing a sequence of nodes in the merge tree $\MM_f$. The first and last nodes of this sequence are called the \textbf{endpoints} of the branch.
\end{definition}

\begin{definition}
A \textbf{branch decomposition} of a merge tree is a set of branches such that every edge $e \in E(\MM_f)$ appears in exactly one branch.
\end{definition}

\begin{definition}
A branch decomposition of a merge tree is a \textbf{hierarchical decomposition} if (1) there is exactly one branch which connects two extrema to one another (called the \textbf{root branch} and (2) every other branch connects an extrema to a node that is interior to another branch.
\end{definition}

\begin{definition}
Let $H_f$ be a hierarchical decomposition of a merge tree $\MM_f$. The \textbf{branch decomposition tree (BDT)}, $b_f$, with respect to $H_f$ is a rooted tree $b_f = (V,E)$ where $v = (v_1,v_2) \in V$ represent the branches of $H_f$. The edge $e(v,u) \in E$ if and only if $u$ has an endpoint interior to the branch $v$.
\end{definition}

For every hierarchical decomposition of a merge tree $\MM_f$, we obtain a unique BDT $b_f$. Each node $u\in b_f$ corresponds to two vertices of $\MM_f$. If $u \in b_f$ is not the root node, then there is a corresponding saddle $u_s \in \MM_f$ and a corresponding maxima $u_e \in MM_f$. The root node $r \in b_f$ corresponds to the global minimum $r_s \in \MM_f$ and a maxima $r_e \in \MM_f$. We denote the set of all possible branch decomposition trees of a merge tree $\MM_f$ as $\BB_f$. \autoref{fig:mergeTreeToBDT} shows the eight different BDTs for a merge tree with eight nodes.
\begin{figure}
    \centering
    \includegraphics[width=0.49\textwidth]{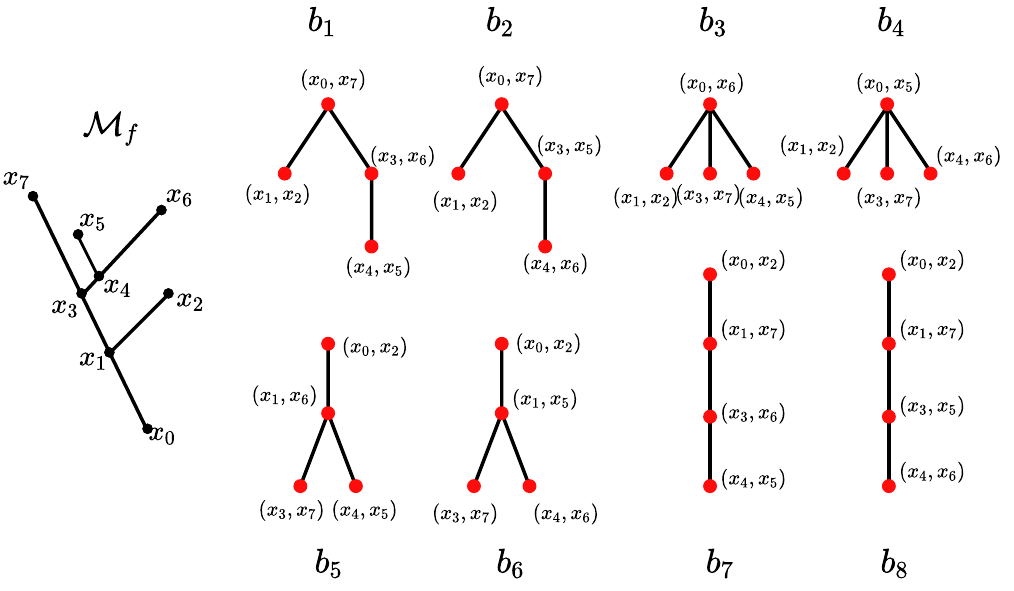}
    \caption{Set of possible branch decomposition trees for a single merge tree $\MM_f$.}
    \label{fig:mergeTreeToBDT}
\end{figure}
\subsection{Stability and Discriminativity}

\begin{definition}
A distance $d$ defined merge trees is said to be \textbf{stable} if and only if 
\[d(\MM_1,\MM_2) \leq ||f-g||_{\infty},\]
where $f,g$ are the corresponding functions for the scalar fields of $\MM_1$ and $\MM_2$, and $\MM_1$ and $\MM_2$ are defined on the same domain $\X$.
\end{definition}

Stability of a distance guarantees that point-wise perturbations introduced into the dataset will not drastically change the merge tree. Saikia et al. \cite{Saikia2014} defined two different types of instabilities which are exhibited in merge trees: \textbf{horizontal instabilities} and \textbf{vertical instabilities}.
\begin{definition}
Let $(\X,f)$ be a scalar field with respective merge tree $\MM_f$ such that 
there exists a pair $s_1,s_2\in V(\MM_f)$, with $\deg(s_1) = \deg(s_2) = 3$ and such that $|\tilde{f}(s_1) - \tilde{f}(s_2)| < 2\e$. If $e(s_1,s_2) \in E(\MM_f)$, then $(\X,f)$ is \textbf{horizontally} $\mathbf{\e}$\textbf{-unstable}.
\end{definition}

\begin{definition}
Let $(\X,f)$ be a scalar field with respective merge tree $\MM_f$ such that there exists a pair of vertices $m_1,m_2\in V(\MM_f)$, with $deg(m_1) = deg(m_2) = 1$ and such that $|\tilde{f}(m_1) - \tilde{f}(m_2)| < 2\e$. Let $(s_1,m_1) \in \Dgm(\MM_f)$ and $(s_2,m_2) \in \Dgm(\MM_f)$ be the persistence pairs corresponding to $m_1$ and $m_2$, for some $s_1,s_2 \in V(\MM_f)$. If there exists monotone paths $p_{1,2}:s_1 \to m_2$ and $p_{2,1}: s_2 \to m_1$, then $(\X,f)$ is \textbf{vertically} $\mathbf{\e}$\textbf{-unstable},
\end{definition}

For our algorithm, we will use branch decomposition trees to organize the features of the scalar field. Due to this setup, we may have a situation where a small change in the function values of extrema switches their total ordering -- possibly altering the topology of BDTs. This is called a \textbf{vertical instability}. \autoref{fig:instabilities} depicts the affects of perturbations in a vertical and horizontal manner on a function $f$.

\begin{figure}
    \centering
    \includegraphics[width=0.5\textwidth]{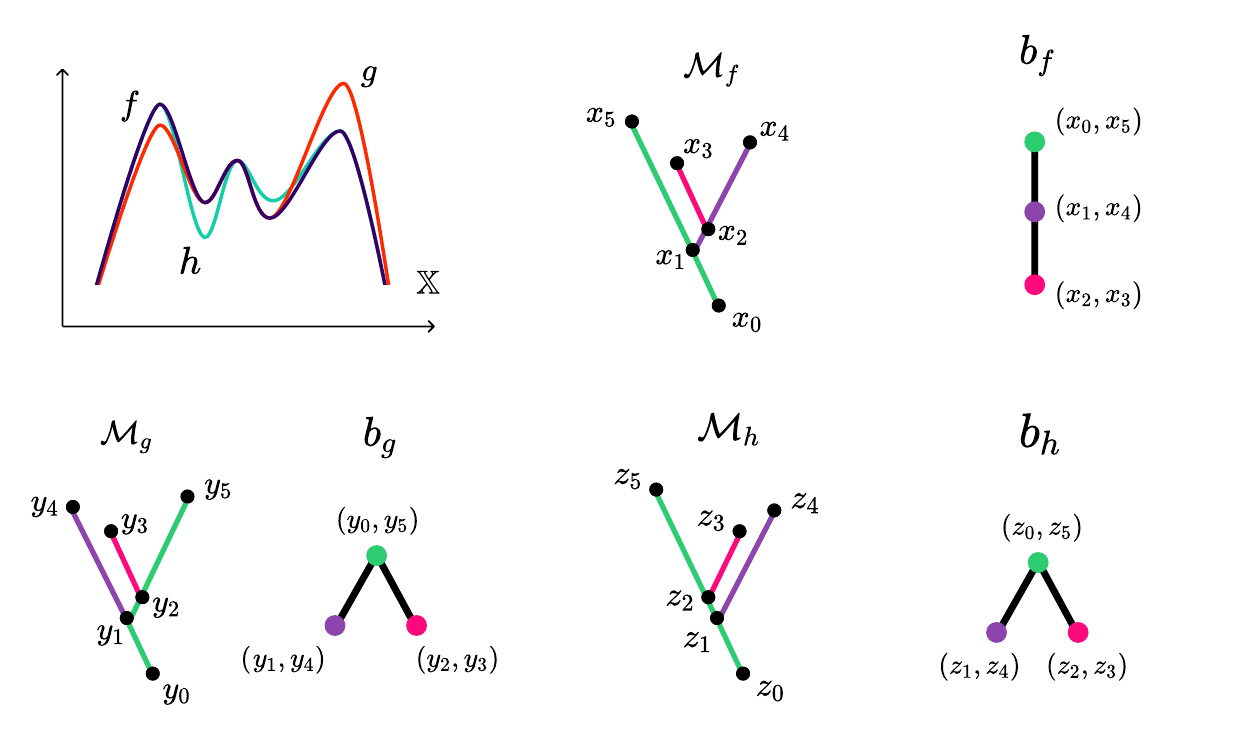}
    \caption{Three functions, $f,g,h$ all defined on the same domain $\X$. The functions $g$ and $h$ are perturbed version of $f$, where $g$ presents a vertical instability and $h$ presents a horizontal instability. The corresponding branch decomposition trees are the unique BDTs determined by the persistence of each feature.
    \label{fig:instabilities}}
\end{figure}

\begin{definition}
A distance $d$ defined on merge trees is said to be more \textbf{discriminative} than a baseline distance $d_0$ if there exists some constant $c > 0$ such that \[d_0(\MM_f,\MM_g) \leq c\cdot d(\MM_f,\MM_g),\] for all merge trees $\MM_f,\MM_g$, and if there does not exist a constant $c'$ such that $d_0 = c'\cdot d$.
\end{definition}

If a similarity measure is strictly bounded below by a baseline distance (up to a constant $c$), then there are cases in which the baseline distance is not able to discern between two merge trees while the distance $d$ does detect some dissimilarity. Furthermore, this implies that if the distance $d$ detects no difference between two merge trees, then the baseline will not detect any difference as well. 

A core position on discriminativity being desirable is that we expect these merge tree distances to inherently be more computationally complex than persistence diagram distances since the graph-based descriptors are strictly more complex than set-based descriptors. Thus, these merge tree distances will trade off their computational efficiency for encoding more information in the similarity measure. As with the theoretical merge tree distances, we will use the bottleneck distance as our baseline since the theoretical merge tree distance and bottleneck distance all use a ``max-feature-difference'' approach to similarity measuring.

A similar notion to discriminativity is \textbf{isomorphism invariance}.
\begin{definition}
A distance $d$ on merge trees is \textbf{isomorphism invariant} if $d(\MM_f,\MM_g) = 0$ if and only if $\MM_f$ and $\MM_g$ are merge tree isomorphic.
\end{definition}

It has been shown that the bottleneck distance is not isomorphism invariant on the space of merge trees while all of the theoretical graph-based distances are~\cite{Bollen2021}. The merge tree edit distance is also isomorphism invariant~\cite{Sridharamurthy2018}.

\subsection{Zigzag Diagrams}

The \textbf{universal distance} (originally referred to as the \textbf{Reeb graph edit distance} \cite{Bauer2020}) is a stable, discriminative distance defined on Reeb graphs and merge trees which has been shown to be the largest stable distance defined on Reeb graphs -- a property known as \textbf{universality}. On merge trees, it was shown to be equivalent to the interleaving and functional distortion distance \cite{Bauer2021} -- thus making all of these distances universal.

The universal distance is defined by constructing a \textbf{zigzag} diagram of topological spaces which connects a source merge tree $\MM_f$ to its target $\MM_g$. 
These zigzag diagrams can be intuitively thought of as a sequence of operations carrying one merge tree to another. We use a simplified version of the zigzag diagram for use with our distance. The term \textit{carry} is used to state that we are transforming a source merge tree $\MM_f$ into a merge tree $\MM'$ which is isomorphic to $\MM_g$.

\begin{definition}
Let $\MM_f,\MM_g$ be two merge trees. A \textbf{zigzag diagram} $Z$ is a sequence of merge trees $M = \{\MM_f = \MM_1,\MM_2,\ldots,\MM_{n-1},\MM_n = \MM_g\}$ coupled with a sequence of 1-dimensional graphs $X =\{X_1,\ldots,X_{n-1}\}$ such that for each $X_i$, there are two valid maps $q_{i,i}:V(X_{i}) \to V(\MM_{i}) ,q_{i,i+1}: V(X_{i})\to V(M_{i+1})$ which respect edge assignments. That is, if $e(x_j,x_k) \in X_i$, then $e(q_{i,i}(x_j),q_{i,i}(x_k)) \in \MM_i$. The sequence $X$ will be called the \textbf{connecting spaces} of $Z$ while $M$ is called the \textbf{merge trees} of $Z$
\end{definition}

Each merge tree $\MM_i\in M$ will have an associated function $f_i$. In general, these merge trees need not be Morse. Specifically, we will have merge trees with vertices of degree 4 which is not permitted under the definition of Morse functions on 2-manifolds. The connecting spaces are responsible for changing adjacencies in the merge trees. \autoref{fig:zigzagBackForth} depicts two different zigzag diagrams. The first carries a merge tree $\MM_f$ to $\MM_g$, while the bottom diagram is in reverse order.

\begin{definition}
The \textbf{limit} $\dLimit$ of a zigzag diagram $Z$ with $n$ merge trees is the $(n-1)$-dimensional space where $x = (x_1,\ldots,x_{n-1}) \in \dLimit$ if $q_{i,i+1}(x_i) = q_{i+1,i+1}(x_{i+1})$ for all $x_i$. We define $y_i$ as $y_i = q_{i,i}(x_i)$.
\end{definition}

\begin{definition}
The \textbf{spread} $\spread$ of an element $x \in \dLimit$ is the difference between the maximum function value and minimum function value it attains in the zigzag diagram. That is,
\[\spread(x) = \max_{i=1,\ldots,n}f_i(y_i)-\min_{i=1,\ldots,n}f_i(y_i).\]
The \textbf{cost} of the zigzag diagram $Z$ is then the largest spread of its limit $\dLimit$.
\[c(z) = \max_{x\in \dLimit} \spread(x).\]
\end{definition}

When the choice of zigzag diagram is not clear, we use the notation $\dLimit(Z)$ to denote the limit of the zigzag diagram $Z$. See Appendix A.2 for another example of a zigzag diagram with the corresponding spread.

\section{Merge Tree Matching Distance}
\label{sec:distance}

The bottleneck distance constructs a similarity measure between scalar fields by 1) effectively encoding the features of the scalar field as a multiset of points, 2) constructing a way to match the encoded features of one scalar field to the features of another, 3) computing a cost on this matching by computing the largest difference between two matched features, and 4) taking the distance to be the lowest cost over all possible matchings. The theoretical merge tree distances can be thought of in a similar fashion. For example, the universal distance requires a choice of which features to transform into others, provides a way to carry out this transformation by using zigzag diagrams, and then computes a cost of this zigzag diagram \cite{Bauer2020}.

Our distance is motivated by three main objectives: 
\begin{itemize}
\item create a distance which is similar to the bottleneck distance and the theoretical merge tree distances in that it 1) properly encodes the features of the merge tree, 2) matches features of one merge tree to another, 3) assigns a cost to this matching by computing the largest feature difference, and 4) minimizes this cost over all possible matchings;
\item construct it in such a way that it is isomorphism invariant on the set of merge trees as well as being more discriminative than the bottleneck distance; and
\item make sure that the distance handles cases of instability correctly.
\end{itemize}

\subsection*{Distance Definition}

When constructing a distance between merge trees, we need to make sure that the distance captures the difference based on the relationship between features that are in the original scalar field rather than solely on the paired critical points. We encode this hierarchical relationship between features using the BDT, which also captures topological features (pairs of critical points) as individual vertices in the BDT.

Unlike persistence diagrams, there are many different BDTs for each merge tree. Using only one can lead to vertical instabilities in the distance. Thus, in order to adequately find the distance between two merge trees, enumeration of all the BDTs is needed, similar to Beketayev et al.~\cite{Beketayev2014}. Let $\MM_f$, $\MM_g$ be two merge trees with respective sets of BDTs $\BB_f,\BB_g$. A matching between a fixed $b_f \in \BB_f$ and $b_g\in\BB_g$ gives us a matching between the features of $\MM_f$ and $\MM_g$.

\begin{definition}
Let $\MM_f,\MM_g$ be two merge trees with respective sets of BDTs $\BB_f,\BB_g$. Let $b_1 \in \BB_f,b_2\in\BB_g$ be two BDTs and let $\lambda$ be an \textbf{empty node} not in $V(b_1)$ nor $V(b_2)$. We let $x_e,x_s$ denote the extrema and saddle nodes of a vertex $x \in V(b_i)$. We define $\bar{b}_i:=b_i\cup\{\lambda\}$. A \textbf{matching} $M$ between $b_1$ and $b_2$ is a binary relation $M \subseteq \bar{b}_1 \times \bar{b}_2$ such that the following conditions hold:
\begin{enumerate}
    \item $(r_1,r_2) \in M$, where $r_1 \in V(b_1),r_2 \in V(b_2)$ are the respective roots of $b_1,b_2$.
    \item Each element in $V(b_1)$ and $V(b_2)$ appear in exactly one pair in $M$.
    \item If $(x,\lambda) \in M$, then $(f(x_e),f(x_s)) \in \dgmf$.
    \item If $(\lambda,y) \in M$, then $(g(y_e),g(y_s)) \in \dgmg$ .
\end{enumerate}
A \textbf{partial matching} $M'$ is a matching between BDTs with condition 1) loosened to have each element of $V(b_1)$ and $V(b_2)$ appear in at most one pair of $M'$.
\end{definition}

Elements of a matching $M$ fall into three different categories: \textbf{insertion} pairs which have the form $(\lambda,v)$, \textbf{deletion} which pairs have the form $(u,\lambda)$, and \textbf{relabel} pairs which have the form $(u,v)$. If $(u,v) \in M$ and $(u_p,v_p) \notin M$, where $u_p,v_p$ are the parents of nodes $u,v$, then $(u,v)$ is further categorized as a \textbf{movement} relabel pair.

An \textbf{induced zigzag diagram} is the zigzag diagram which arises from carrying one merge tree to another. This induced zigzag diagram follows the protocol that we apply insertions, non-movement relabels, movement relabels, and then deletions. We apply insertions first and deletions last since we cannot create disconnected merge trees in our zigzag diagram. 

Each matching $M$ between two BDTs $b_i \in \BB_f$,$b_j\in\BB_g$ induces two zigzag diagrams: the forward zigzag diagram $Z_{f,g}$ from $\MM_f$ to $\MM_g$, and the backward zigzag diagram $Z_{g,f}$ from $\MM_g$ to $\MM_f$. The difference in these two induced zigzag diagrams is when the relabel pairs are applied. Taking the minimum cost over the forward and backward zigzag diagram ensures that we handle the instability in a way that keeps the distance below the $L_{\infty}$ distance. \autoref{fig:zigzagBackForth} depicts an example of the possible difference. Since these merge trees are horizontal $\e$-unstable, we need to make sure that our distance is less than or equal to $\e$ for us to have a distance below the $L^{\infty}$ distance. In this case, $\e = |\tilde{g}(y_1)-\tilde{g}(y_2)|$. Note that the backward zigzag diagram has a spread less than $\e$.

We define the cost of a matching as follows:

\begin{definition}
The \textbf{cost} of a matching $M$ is the minimum value between the costs of the induced zigzag diagrams. That is,
\[c(M) = \min_{Z \in \{Z_{f,g},Z_{g,f}\}}c(Z) = \min_{\dLimit \in \{\dLimit(Z_{f,g}),\dLimit(Z_{g,f})\}}\max_{x\in\dLimit(Z)}\spread(x)\]

\end{definition}

\begin{definition}
The \textbf{merge tree matching distance} is defined as the minimum cost of all matchings between all pairs of BDTs. That is,
\[d_M(\MM_f,\MM_g) = \min_{b_i \in \BB_f,b_j \in \BB_g}\min_{M\in \mathsf{M}_{i,j}}c(M),\]
where $\mathsf{M}_{i,j}$ denotes the set of all possible matchings between $b_i\in\BB_f$ and $b_j\in\BB_g$.
\end{definition}

\begin{figure}[h]
    \centering
    \includegraphics[width=0.5\textwidth]{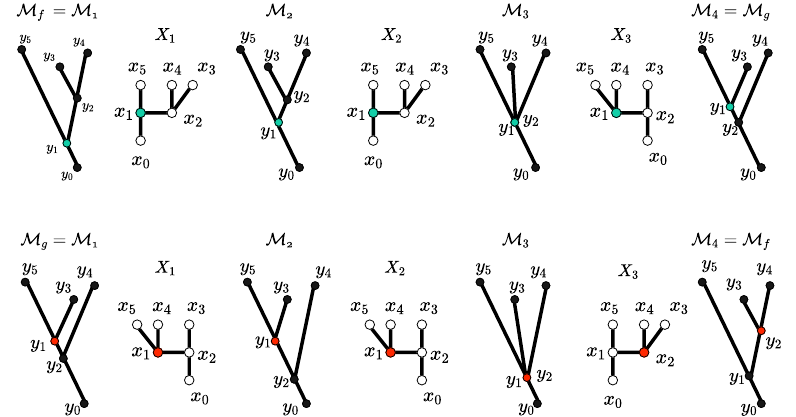}
    \caption{Depiction of the forward zigzag diagram $Z_{f,g}$ (top) and the backward zigzag diagram $Z_{g,f}$ (bottom) of a matching between merge trees. The connecting spaces are viewed as 1-dimensional graphs in between each of the merge trees of the sequence. Here, we always have the mapping $x_i \to y_i$ for all maps from the connecting spaces to the respective merge trees. With horizontal instabilities, it is desirable that moving the branch $e(y_2,y_3)$ has a cost equal to the largest difference between moving $y_2$ up to its final position or $y_1$ down to its final position. $Z_{f,g}$ achieves this cost, while $Z_{g,f}$ achieves a larger cost due to relabeling the saddle downwards to begin with, and then performing the movement.}
    \label{fig:zigzagBackForth}
\end{figure}

\begin{proposition}
The merge tree matching distance is isomorphism invariant. That is, $d_M(\MM_f,\MM_g) = 0$ if and only if $\MM_f$ and $\MM_g$ are merge tree isomorphic.
\end{proposition}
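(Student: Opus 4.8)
The plan is to prove the two implications separately: the reverse implication (isomorphic $\Rightarrow$ distance zero) is a short construction, while the forward implication (distance zero $\Rightarrow$ isomorphic) carries the real content.

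\textbf{Isomorphism implies distance zero.} Suppose $\alpha: V(\MM_f) \to V(\MM_g)$ is a merge tree isomorphism. Since $\alpha$ is an edge bijection preserving function values, it carries monotone paths to monotone paths, so for any hierarchical decomposition $H_f$ of $\MM_f$ the image $\alpha(H_f)$ is a hierarchical decomposition of $\MM_g$; on the associated BDTs $b_f \in \BB_f$ and $b_g \in \BB_g$ this induces a bijection $\phi: V(b_f) \to V(b_g)$ that preserves the tree structure (because $\alpha$ preserves the relation ``has an endpoint interior to''), preserves the function values of both the saddle and extremum component of every vertex, and sends the root to the root. Then $M = \{(v,\phi(v)) : v \in V(b_f)\}$ is a valid matching with no insertion or deletion pairs, and every relabel pair is a non-movement pair that leaves function values unchanged. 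Hence the induced forward zigzag diagram $Z_{f,g}$ is trivial: each of its merge trees is isomorphic to $\MM_f$ and each connecting space is an identity, so every element of its limit attains a single function value throughout the diagram and thus has spread $0$. Therefore $c(M)=0$, and since $d_M$ is a minimum of nonnegative costs, $d_M(\MM_f,\MM_g)=0$.

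\textbf{Distance zero implies isomorphism.} First note that $d_M$ is attained: $\BB_f$ and $\BB_g$ are finite, and for fixed BDTs the matchings form a finite set of binary relations. So $d_M(\MM_f,\MM_g)=0$ yields BDTs $b_i \in \BB_f$, $b_j \in \BB_g$ and a matching $M$ between them with $c(M)=0$, whence at least one induced zigzag diagram, say $Z_{f,g}$, has cost $0$: every element of its limit has spread $0$. The core of the argument is to show such an $M$ is a BDT isomorphism preserving all function values, via three observations. First, $M$ has no deletion pair: a deletion pair $(x,\lambda)$ forces the induced zigzag to contract the branch of $x$, whose endpoints are a saddle $x_s$ and extremum $x_e$ with $f(x_e)-f(x_s)>0$ since $f$ is a simple Morse function, so the limit of $Z_{f,g}$ contains an element of spread at least $f(x_e)-f(x_s)>0$; symmetrically there is no insertion pair, so $M$ is a bijection $\phi$ with $\phi(r_i)=r_j$. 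Second, $\phi$ has no movement relabel pair: a movement reattaches a branch from the interior of one branch to the interior of a different one, and since two distinct branches of a merge tree that are separated at a given height can only be brought together by changing function values, such a reattachment forces a strictly positive vertical displacement of the moved saddle, again contradicting spread $0$. Third, since all relabels are therefore non-movement relabels, each merely interpolates the values of its saddle and extremum components, so spread $0$ forces $f$ and $g$ to agree on every matched pair of components; thus $\phi$ preserves all function values.

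\textbf{Reading off the isomorphism.} Each BDT vertex $u$ corresponds to the pair $\{u_s,u_e\}$ of merge tree vertices, and as $u$ ranges over $V(b_i)$ these pairs partition $V(\MM_f)$ (the saddle components give the global minimum together with all degree-$3$ vertices; the extremum components give all maxima), and likewise for $V(b_j)$. Hence $\alpha(u_s):=\phi(u)_s$ and $\alpha(u_e):=\phi(u)_e$ defines a bijection $V(\MM_f)\to V(\MM_g)$, which preserves function values by the third observation. To see it preserves edges I would invoke the fact that a BDT together with the function values of its vertices determines the merge tree up to isomorphism: the parent relation in the BDT specifies which branch each saddle lies on, and its function value specifies where along that branch, hence its order among the attachment points of that branch, which reconstructs every adjacency of $\MM_f$ and matches it, under $\phi$, to the corresponding adjacency of $\MM_g$. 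This $\alpha$ is the desired merge tree isomorphism. I expect the main obstacle to be the second observation, and more broadly the bookkeeping that extracts from ``every spread is $0$'' the statement that each individual insertion, deletion, and movement operation contributes strictly positively to the cost, since it requires a careful description of the connecting spaces that the induced zigzag diagram builds for these operations; the reconstruction of a merge tree from its BDT plus vertex function values, used in both directions, is routine but deserves to be isolated as a lemma.
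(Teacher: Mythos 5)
Your proposal is correct and follows essentially the same route as the paper's own proof: both directions are handled by translating a merge tree isomorphism into a zero-cost BDT matching, and conversely by observing that any insertion, deletion, or movement relabel forces a strictly positive spread, so a zero-cost matching must be a value-preserving BDT isomorphism. You simply supply more detail than the paper does (in particular the reconstruction of the merge tree from its BDT, which the paper asserts without argument); the only slip is that the spread incurred by deleting a branch is half its persistence, not the full persistence, since the branch is contracted to its midpoint --- which does not affect the conclusion that it is strictly positive.
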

\begin{proof}
Suppose $\MM_f$ and $\MM_g$ are isomorphic. Then, each have identical sets of BDTs. Let $b_f,b_g$ be two such identical BDTs of $\MM_f,\MM_g$, respectively. We define $M$ to be the matching induced by the isomorphism between $b_f$ and $b_g$. Since they are isomorphic trees, there are no movement relabels in the induced zigzag diagram. Then, the induced zigzag diagrams will both consist of only the single merge tree since $\MM_f$ and $\MM_g$ are already isomorphic to one another.
Now, suppose that the merge tree matching distance between $\MM_f$ and $\MM_g$ is 0. Then, there must exist a pair of BDTs $b_f$ and $b_g$ such that the matching induces no deletions, insertions, movement relabels, or relabels which incur a cost. Thus, $b_g$ and $b_g$ must be tree isomorphic as well as having the same values on each of their corresponding saddles and extrema. Thus, $\MM_f$ and $\MM_g$ must also be isomorphic.
\end{proof}

\section{Algorithm}
\label{sec:algorithm}

Our algorithm is dependent on finding the cheapest matching between the vertices of all possible BDTs of two input merge trees. We are motivated by the well-studied \textbf{graph edit distance (GED)} in order to solve this. As stated in Section \ref{sec:distance}, the BDTs allow us to convert merge trees into data structures where the features of interest are now single vertices rather than pairs of vertices. GED has a similar problem statement: given two graphs $G_1$ and $G_2$, find a matching between their vertices by deducing an edit sequence between the graphs. From this edit sequence a cost is computed.

Our algorithm is split into three main components:
\begin{enumerate}
    \item Construction of all BDTs from the input merge trees $\MM_f$,$\MM_g$.
    \item Finding full matchings between two BDTs $b_f$,$b_g$ using the A* algorithm.
    \item Computing the cost of a full matching $M$ between $b_f$ and $b_g$ by first constructing the elements of the limit $\dLimit(Z)$.
\end{enumerate}

\subsection{Constructing All Branch Decomposition Trees}
\label{subsubsec:constructBDTs} Suppose we have a merge tree $\MM_f$. As an example, we describe how to construct a persistence-based BDT. We first take the global min $v_0$ and find the path $p_{v_0,v_n}$ to the global max $v_n$. The pair $(v_0,v_n)$ is placed as the root of the BDT. For every vertex $v$ along this path, we recursively call the construction of the BDT algorithm with $v$ being the new root. Each node $v \in p_{v_0,v_n}$ contributes a single child node to the root node. The algorithm ends when all nodes are paired.

For general BDTs, we must consider every maxima $v$ that is in the up-path of a root $u$ as a possible pairing rather than just the global maxima of that branch. In order to avoid redundant computations of the same branches, we add another recursive layer to the BDT algorithm. 

Suppose that $(v_s,v_e)$ is a pair corresponding to a branch $\mathsf{B}$ which has just been placed into a BDT $b$. For computation of a single BDT, we would recursively call this operation for all $v \in p_{v_s,v_e}$. However, we know that for every node $(v_s,v_e)$, there are multiple different configurations of its children. To avoid redundant computations, we generate a specific number of BDTs for each possible pairing of a saddle $v$, for each $v \in p_{v_s,v_e}$. First, assume $v$ is the only vertex in $p_{v_s,v_e}$ which is not one of the endpoints. Let $\{u_1,\ldots,u_m\}$ be the set of extrema in the up-path of $v$. For each pair $(v,u_i)$, we construct a new copy of $b$ and add $(v,u_i)$ as the child of $(v_s,v_e)$.

Now, suppose instead that $\{v_1,\ldots,v_k\}$ is the set of vertices in $p_{v_s,v_e}$ which are not the endpoints. We proceed to carry out the same step as if there was only one non-endpoint vertex $v$, except instead of making $m$ copies of $b$, we need to make a copy for the possible combinations of choices of extrema pairing for all $v_i \in \{v_1,\ldots,v_k\}$. More specifically, if $m_i$ denotes the number of extrema that $v_i$ can be paired with, we make $m_1\cdot m_2\cdot\ldots\cdot m_{k-1}\cdot m_k$ copies of $b$. Each $b$ gets a unique set of nodes added as the children of $(v_s,v_e)$.

\subsection{Finding the Best Matching Using A*}
\label{subsubsec:a*}

The A* algorithm maintains a priority queue $Q$ which holds a list of partial matchings $M'$ with a current cost $c(M')$. Since our distance requires the full matching before a true cost can be determined, we use the bottleneck distance between the currently matched nodes in order to under approximate cost. More specifically, let $P$ be a partial matching between $b_f$ and $b_g$. Then $c(M') = \max_{(u,v) \in M'} c(u,v)$, where $c(u,v)$ is the cost function in \autoref{def:bottleneckDistance}. As stated earlier, providing an under approximation to the true cost will guarantee that we still reach an optimal solution. However, since our under approximation will not necessarily converge to the true cost when we reach a full matching (unlike standard GED), we have to have an additional step which computes the true cost of the matching and to determine whether to continue finding better matchings dependent on this cost. Appendix A.3 shows an overview of our A* algorithm with this additional module.

The efficiency of A* is heavily dictated by introducing pruning techniques to reduce the number of possible matchings and using a good heuristic function which approximates the future cost along a particular path in the search tree. In our case, we can effectively prune the search space by not matching two nodes to one another if it would be cheaper to simply insert or delete both the nodes. As for a heuristic function, we introduce the function $h(M')$ which finds the lowest possible future cost based on the fact that if $b_f$ has $n$ unmatched nodes and $b_g$ has $m$ unmatched nodes, we must insert or delete the difference in the number of nodes. See Appendix A.1 for more information on our pruning and heuristic function. As opposed to GED, our ``approximate cost" would then be $\max\{c(M'),h(M')\}$ rather than $c(M') + h(M')$.

\subsection{Computing the Cost by Constructing the Limit $\dLimit(Z)$}\label{subsec:findingLimit}

Let $b_f$ and $b_g$ be the branch decomposition trees that we are comparing and let $M$ be the current full matching between them. As stated before, the forward zigzag diagram $Z_{f,g}$ is constructed by applying insertions, non-movement relabels, movement relabels, and then deletions to $b_f$ which ultimately creates $b_g$. Just as the graph edit distance can be thought of as constructing edit sequences to carry one graph to another, our distance can be thought of as altering the branch decomposition $b_f$ with this set of operations in order to construct $b_g$. 

There is a one-to-one mapping between the vertices of any two connecting spaces. Thus, we keep the labeling of each connecting space the same. When we say that $x_i = x_j$ for two vertices in different connecting spaces, this implies that $i = j$. We denote the set of vertices in the connecting spaces as $V(X)$.

Let $x^i,x^{i+1}$ be vertices of $X_i,X_{i+1}$, respectively. Note that $x_i = \{x_i^1,\ldots,x_i^{n-1}\}$ is an element of $\dLimit(Z)$ for all $x_i \in V(X)$. Additional elements are added if $q_{i,i+1}(x^i) = q_{i+1,i+1}(x^{i+1})$, for some $x^i \in X_i$ and $x^{i+1} \in X_{i+1}$ where $x^i\neq x^{i+1}$. We call these \textbf{swaps}. For example, Fig. \ref{fig:zigzagBackForth} has a swap in both the forward and backward zigzag diagram. For the forward zigzag diagram, a swap occurs in $\MM_3$ where $q_{1,2}(x_1) = q_{2,2}(x_2)$. This implies $\{x_1,x_1,x_2\}$ is also an element of the limit. For every swap, there are two options to continue constructing the limit: continue with the same vertex or move to the swapped vertex.

In what follows, each pair $(u,v) \in M$ would create at least one connecting space and one merge tree. If $(u,v)$ is a movement relabel, we may create more. Each time we would create a merge tree and connecting space, instead we create a copy the previous BDT and alter it according to the pair (inserting a node, deleting a node, relabeling, or moving a node). By the end of the algorithm, we are left with a set of BDTs $\{b_f = b_1,b_2,\ldots,b_{m-1},b_m = b_g\}$ and a list $S$ of length $m$ which will contain our swaps. These two data structures are enough to determine the elements of $\dLimit(Z)$.

To begin, we construct a copy of $b_f$ and apply all $n$ insertions. We sort the list of insertions by increasing depth order to make sure that if $u$ is to be inserted on $u'$, then $u'$ already exists. Non-movement relabels are then conducted in no particular order. 

Let $b_k$ be the current BDT. In a movement relabel, we have a source branch $u \in b_k$, its parent $u_p \in b_k$, a target branch $v \in b_g$ and the parent of the target $v_p \in b_g$. Let $u'$ be such that $(u',v) \in M$ and $u'_p$ be such that $(u'_p,v_p) \in M$. Our goal is then to move $u$ to have parent $u'_p$. We find the path $\rho: u_p \rightsquigarrow u'_p$. The node closest to the root of $b_k$ in depth is known as the \textbf{intersection}. The intersection is the branch where $u$ does \textit{not} need to attach to its saddle. For every other branch $w$ in the path $\rho$, we know we must add an additional connecting space and add the swap tuple $(u_s,w_s)$, where $u_s$ and $w_s$ are saddles of the respective merge trees, to the list. \autoref{fig:movementExample} depicts an example of recording the intersection and swaps for moving a branch.

\begin{figure}
    \centering
    \includegraphics[width=0.48\textwidth]{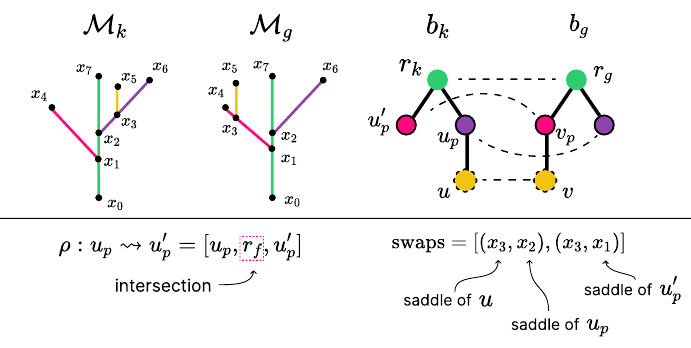}
    \vspace{-1em}
    \caption{A depiction of a movement and how we record the swaps and intersection. The branch labeled $u\in b_k$ is matched with the branch labeled $v\in b_g$, but their parents are not matched with each other. We find the path from the parent of $u$, denoted as $u_p$, to the node which matches with the parent of $v$, denoted as $u'_p$. This requires us to have the branch $u = (x_3,x_5)$ to pass the saddle $x_2$ and then pass $x_1$. Note that $r_f$ is the intersection and thus its saddle is not recorded as a swap.}
    \label{fig:movementExample}
\end{figure}

Movement relabels need to be conducted in an order that makes sure that the BDT stays connected. Our A* algorithm has the restriction that if $u$ is an ancestor of $u'$, then $u$ cannot become a descendant of $u'$ since it is suboptimal (i.e. another choice of BDT should be used if this is the case). However, once we begin altering the original BDT, there are situations where this $u$ needs to be moved onto its parent. To alleviate this, we maintain another priority queue which holds all the movements, first ranked by the depth to the root. If we were to apply a movement, we check if this disconnection will occur. If so, we push the movement pair back into this priority queue with a lower priority index and move onto the next movement. 

\subsection{Putting It All Together}
Algorithm~\ref{alg:mtmd} shows the pseudocode of how we first choose pairs of branch decomposition trees and then subsequently feed these pairs into our A* algorithm. Note that we introduce a ``cutoff'' variable which indicates when to stop the A* computation of a pair of BDTs. If the cheapest current matching is ever larger than this cutoff, we stop the computation and move onto the next pair.
\begin{algorithm}
\caption{mergeTreeMatchingDistance}\label{alg:mtmd}
\hspace*{\algorithmicindent} \textbf{Input:} Two merge trees $\MM_f,\MM_g$ \\
 \hspace*{\algorithmicindent} \textbf{Output:} Merge tree matching distance between $\MM_f,\MM_g$
\begin{algorithmic}[1]
\State $\BB_f$ = ConstructBDTs($\MM_f)$, $\BB_g$ = ConstructBDTs($\MM_g)$, cutoff = 0
\For{$b_f \in \BB_f$}
    \For{$b_g \in \BB_g$}
        \State currCost,completeMatch = aStar($b_f$,$b_g$,cutoff)
        \If{completeMatch == \texttt{true}}
            \State cutoff = currCost
            \State dist = currCost
        \EndIf
    \EndFor
\EndFor
\State \Return dist

\end{algorithmic}
\end{algorithm}

\section{Persistence Simplification}

Suppose we have two merge trees $\MM_f$ and $\MM_g$ whose distance $d_M$ is $A$. We can persistence simplify each by some $0 <\e < A$ to reduce its size -- providing a graph which will be more readily computable by both direct computation. Our distance has the convenient property that a simplification by $\e > 0$ will increase the distance by at most $\frac{1}{2}\e$. 

\begin{theorem}
Let $\MM_f,\MM_g$ be two merge trees whose distance $d_M$ is $A$ and let $0<\e<A$ be fixed. Then
\[A\leq d_M(P_{\e}(\MM_f),P_{\e}(\MM_g))\leq A+\tfrac{1}{2}\e.\]
\end{theorem}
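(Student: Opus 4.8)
The plan is to establish the two inequalities separately, with the lower bound being essentially immediate and the upper bound carrying the real content. For the lower bound $A \leq d_M(P_\e(\MM_f), P_\e(\MM_g))$, I would argue by contrapositive/contradiction: if the simplified merge trees were closer than $A$, then by composing the induced zigzag diagram realizing $d_M(P_\e(\MM_f),P_\e(\MM_g))$ with the zigzag diagrams that carry $\MM_f$ to $P_\e(\MM_f)$ (and $\MM_g$ to $P_\e(\MM_g)$), we would obtain a zigzag diagram from $\MM_f$ to $\MM_g$ of cost strictly below $A$, contradicting $d_M(\MM_f,\MM_g)=A$. The key subpoint here is that persistence simplification $P_\e$ is itself realizable as a zigzag diagram (deleting the low-persistence branches, which have persistence at most $\e$) whose cost is at most $\tfrac12\e$ by the deletion cost in \autoref{def:bottleneckDistance} — but actually for the lower bound I only need that simplification does not help, which follows from the fact that $P_\e(\MM_f)$ is a valid choice of target within a controlled distance; I'd want to be careful and instead phrase this directly in terms of matchings between BDTs rather than arbitrary zigzags.

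For the upper bound, the strategy is constructive: take an optimal matching $M$ between BDTs $b_f \in \BB_f$ and $b_g \in \BB_g$ realizing $d_M(\MM_f,\MM_g)=A$, and build from it a matching $M_\e$ between a BDT $b_f^\e$ of $P_\e(\MM_f)$ and a BDT $b_g^\e$ of $P_\e(\MM_g)$. The natural choice is to pick $b_f^\e$ as the sub-BDT of $b_f$ obtained by removing exactly the nodes whose persistence is $\le \e$ (these are precisely the features killed by $P_\e$), and similarly $b_g^\e$; then $M_\e$ is the restriction of $M$ to the surviving nodes. I then need to compare $c(M_\e)$ to $c(M)=A$. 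A node of $b_f$ survives in $b_f^\e$ but its $M$-partner in $b_g$ may not survive (or vice versa), so a relabel pair $(u,v)\in M$ can degenerate into a deletion pair $(u,\lambda)$ (or insertion) in $M_\e$. The deletion cost of $u$ is $\tfrac12|f(u_e)-f(u_s)|$, i.e. half the persistence of $u$; since $v$ did not survive, its persistence was $\le\e$, and the relabel cost $c(u,v)=\max\{|f(u_e)-g(v_e)|,|f(u_s)-g(v_s)|\}$ was already accounted for in $A$, so by the triangle inequality the persistence of $u$ is at most (persistence of $v$) $+\,2c(u,v) \le \e + 2A$, giving deletion cost $\le \tfrac12\e + A$. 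The surviving relabel pairs keep their cost $\le A$, and newly-orphaned deletions of the removed low-persistence nodes themselves cost $\le\tfrac12\e$. Taking the max over all pairs yields $c(M_\e)\le A+\tfrac12\e$, hence $d_M(P_\e(\MM_f),P_\e(\MM_g))\le c(M_\e)\le A+\tfrac12\e$.

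The main obstacle is verifying that $M_\e$ is actually a legal matching between $b_f^\e$ and $b_g^\e$ and that $b_f^\e,b_g^\e$ are genuinely BDTs of the simplified trees — removing low-persistence branches from a BDT must correspond to a valid hierarchical decomposition of $P_\e(\MM_f)$, the root pair must survive (it does, since $A<\e$ forces... actually since the root persistence exceeds $\e$ — this needs the hypothesis $\e<A$ together with stability to ensure global features aren't destroyed), and conditions (3)–(4) of the matching definition must still hold for the newly created deletion/insertion pairs (i.e. the orphaned $u$ really is a persistence pair of $P_\e(\MM_f)$). A secondary subtlety is that the cost is defined as a min over the forward and backward induced zigzag diagrams, and movement relabels can generate swaps; I would argue that restricting to surviving nodes only removes connecting-space vertices and swaps, so the spread of every element of the restricted limit is bounded by the spread of the corresponding element of $\dLimit(Z)$ plus the $\tfrac12\e$ slack from the degenerated pairs — monotonicity of spread under this restriction is the technical lemma that makes the bound go through, and it is where I'd spend the most care.
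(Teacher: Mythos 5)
Your proposal follows essentially the same route as the paper's own proof: start from the optimal matching realizing $A$, observe that persistence simplification turns each relabel pair involving a removed low-persistence node into a deletion of the surviving partner, and bound the resulting cost increase by half the persistence of the removed node, i.e.\ $\tfrac12\e$ --- your triangle-inequality computation $\mathrm{pers}(u)\leq\mathrm{pers}(v)+2c(u,v)$ is precisely the rigorous form of the paper's ``largest increase comes when they share a midpoint'' worst case. If anything you are more careful than the paper, whose proof silently omits the lower bound $A\leq d_M(P_{\e}(\MM_f),P_{\e}(\MM_g))$ and the verification that the restricted matching is legal and that per-pair cost bounds transfer to the zigzag-limit spread, all of which you correctly flag as the points needing attention.
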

\begin{proof}
Let $b_f,b_g$ be the optimal branch decomposition choices along with the optimal matching $M$. Suppose $u=(u_s,u_e) \in V(b_f)$ and $|f(u_s)-f(u_e)| < \e$. Then, persistence simplifying $\MM_f$ by $\e$ removes $u$ from $b_f$ and removes the pair $(u,*) \in M$, where $*$ may be the empty node $\lambda$ or some node $v \in V(b_g)$. If $(u,\lambda) \in M$, then removal of the pair via simplification will not increase the distance since $\frac12|f(u_s)-f(u_e)| < \e < A$ and therefore cannot be the largest cost pair in $M$. Now, suppose $(u,v) \in M$. If $|g(v_s)-g(v_e)| > \e$, then the cost may change depending on where $v$ is now assigned. In the worst case scenario, we delete $v$. In this case, the largest increase from deletion of $v$ and assigning $u$ to $v$ comes when they share a midpoint. Thus, the difference in deletion of $v$ to the relabel is at most $\frac12\e$. 
\end{proof}

Simplification by the same value of $\e$ is not necessary to achieve a bound on the distance. There may arise situations in which simplification by the same value of $\e$ yields merge trees with too little information for data analysis. The less features that exist in the merge tree means the less information we may glean from the matching provided by the distance.

\begin{corollary}
Let $\MM_f,\MM_g$ be two merge trees whose distance $d_M$ is $A$ and let $0<\e_1<A$ and $0 < \e_2 < A$ be fixed. Then
\[A\leq d_M(P_{\e_1}(\MM_f),P_{\e_2}(\MM_g))\leq A+\tfrac{1}{2}\max\{\e_1,\e_2\}.\]
\end{corollary}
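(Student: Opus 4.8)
The plan is to mirror the proof of the preceding theorem, replacing the single simplification parameter $\e$ with the two parameters $\e_1,\e_2$ and tracking which side of the matching each branch comes from. First I would fix $b_f,b_g$ to be the optimal branch decomposition trees with optimal matching $M$ realizing $d_M(\MM_f,\MM_g)=A$, exactly as in the theorem. The lower bound $A\leq d_M(P_{\e_1}(\MM_f),P_{\e_2}(\MM_g))$ I expect to follow by the same reasoning used (implicitly) for the lower bound in the theorem: simplification only removes low-persistence branches, so any matching between the simplified BDTs extends to a matching between the originals by deleting the removed branches, each at cost $<\max\{\e_1,\e_2\}/2<A/2<A$, hence it cannot lower the optimal cost below $A$. (If the paper intends the lower bound to be essentially immediate, I would just cite the theorem's argument verbatim.)

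For the upper bound I would do a case analysis on a branch $u=(u_s,u_e)\in V(b_f)$ with $|f(u_s)-f(u_e)|<\e_1$ that gets removed by $P_{\e_1}$, and symmetrically on a branch $v=(v_s,v_e)\in V(b_g)$ with $|g(v_s)-g(v_e)|<\e_2$ removed by $P_{\e_2}$. If $u$ was deleted in $M$ (paired with $\lambda$), its removal only drops a cost-$\tfrac12|f(u_s)-f(u_e)|<\tfrac12\e_1$ pair, which cannot have been the max since that quantity is below $A$; similarly for a $\lambda$-paired $v$ and $\tfrac12\e_2$. The only case that changes the cost is a relabel pair $(u,v)\in M$ where exactly one of the two branches is simplified away — say $u$ is removed but $|g(v_s)-g(v_e)|\geq\e_2$ so $v$ survives. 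Then in the modified matching $v$ must be reassigned; in the worst case it becomes a deletion $(\lambda,v)$. As in the theorem, the increase in cost from switching $(u,v)$ to $(\lambda,v)$ is maximized when $u$ and $v$ share a midpoint, and is bounded by $\tfrac12\e_1$ (since $|f(u_s)-f(u_e)|<\e_1$). The symmetric situation, where $v$ is removed but $u$ survives, gives an increase of at most $\tfrac12\e_2$. Taking the worst over both sides, every affected pair increases by at most $\tfrac12\max\{\e_1,\e_2\}$, and since the cost of a matching is the max over pairs, $c(M')\leq A+\tfrac12\max\{\e_1,\e_2\}$ for the induced matching $M'$ on the simplified trees; minimizing over all matchings and BDT choices only decreases this, giving the stated upper bound.

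The main obstacle, and the place I would be most careful, is the worst-case accounting when a branch is reassigned after its partner is simplified away — in particular verifying that a chain of such reassignments does not compound: if $u$ is removed and $v$ would get reassigned to some other surviving node $w$ whose own partner was also removed, one must check the net change is still bounded by a single $\tfrac12\max\{\e_1,\e_2\}$ term rather than a sum. The clean way around this is to always reassign a stranded surviving branch to $\lambda$ (a deletion) rather than re-matching it, which is exactly the ``worst case scenario, we delete $v$'' move in the theorem's proof; this decouples the pairs and makes the per-pair bound additive-free. I would also note explicitly (as the paper does) that equal simplification is not required, so the bound genuinely depends only on $\max\{\e_1,\e_2\}$, and that the forward/backward zigzag minimum in the cost of a matching only helps, since removing a branch can only shrink a spread.
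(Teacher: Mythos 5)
Your proposal is correct and follows essentially the same route as the paper: the paper gives no separate proof of the corollary, treating it as the immediate two-parameter version of the theorem's argument, and your case analysis (deletion pairs cost less than $\tfrac12\e_i<A$ so cannot be the max; a stranded relabel partner is deleted, adding at most $\tfrac12\e_i$ when the two branches share a midpoint) is exactly that adaptation. Your explicit remarks on the lower bound and on avoiding compounded reassignments by always sending a stranded partner to $\lambda$ are sensible elaborations of what the paper leaves implicit, not a different method.
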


Note that since the merge tree matching distance is always bounded below by the bottleneck distance, we can compute the bottleneck distance between the two merge trees and use the resulting value to gauge the value of $\e$.

\begin{figure*}
    \centering
    \includegraphics[width=0.97\textwidth]{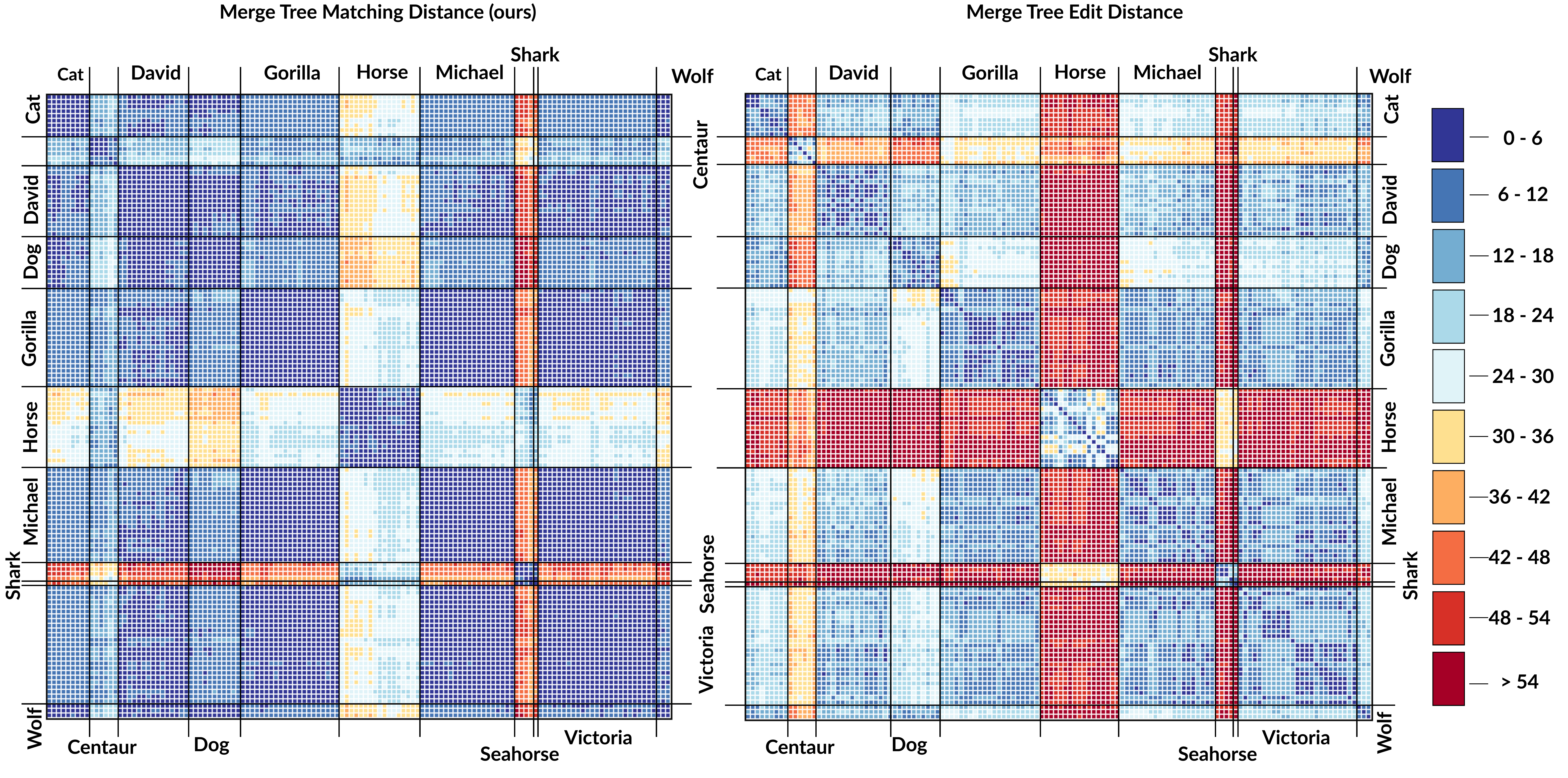}
    \vspace{-1em}
    \caption{Two distance matrices of computing distance on the average geodesic scalar field of the TOSCA non-rigid world dataset. The left is using our merge tree matching distance. The right is the merge tree edit distance from Sridharamurthy et al. \cite{Sridharamurthy2018}.}
    \label{fig:shapesDistMatrix}
\end{figure*}

\section{Experiments}
\label{sec:experiments}

We implemented the algorithm described above using Python. We use the Topology Toolkit \cite{ttk} to visualize and extract merge trees from the input datasets. The bottleneck distance was computed in python through Persim \cite{Persim}. Each experiment was run using on a single AMD EPYC 7642 machine at 2.4GHz using 32 of the cores.  We split the distance computations in batches to work on subsets of the data in parallel.  Within a batch, we also took advantage of the on node parallelism to compute  computing multiple distances at the same time. 
In each of these experiments, we have decided to apply persistence simplification in order to move our merge trees down to 14 nodes each.

In order to determine a size for the graphs which provided us with a good balance of computational feasibility and low persistence thresholding, we evaluated the computation times of 8,10,12,14,16 and 18 node graphs and the corresponding persistence thresholds which achieved these sizes. We randomly pulled 175 pairs of scalar fields from our shape comparison experiment (\autoref{subsec:shapecompare}) and computed the distance between pairs to track the average computation time. \autoref{fig:computationTime} shows these values as line charts. We can see that using 16 and 18 node graphs begins to increase the computation time dramatically while the persistence simplification needed to attain 16 and 18 node graphs is only slightly less than the persistence simplification value needed to attain 14 node graphs. For reference, the average difference between the global min and global max for this subset of the data was 272.79.
\begin{figure}
    \centering
    \includegraphics[width=0.48\textwidth]{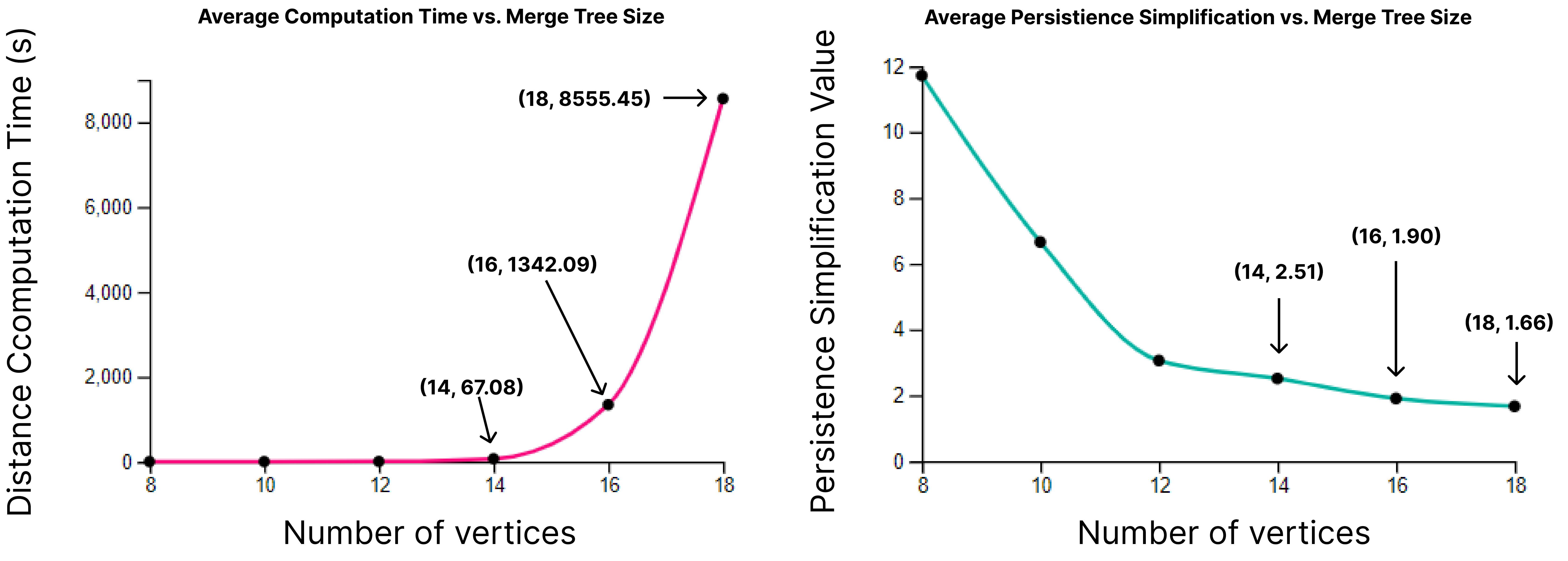}
    \caption{(left) The average computation times for a subsample of the shape comparison dataset with graph sizes 8, 10, 12, 14, 16, and 18. (right) The corresponding average persistence simplification values for this subsample of data.}
    \label{fig:computationTime}
\end{figure}

\subsection{Shape Comparison\label{subsec:shapecompare}}
\begin{figure}[h]
    \centering
    \includegraphics[width=0.45\textwidth]{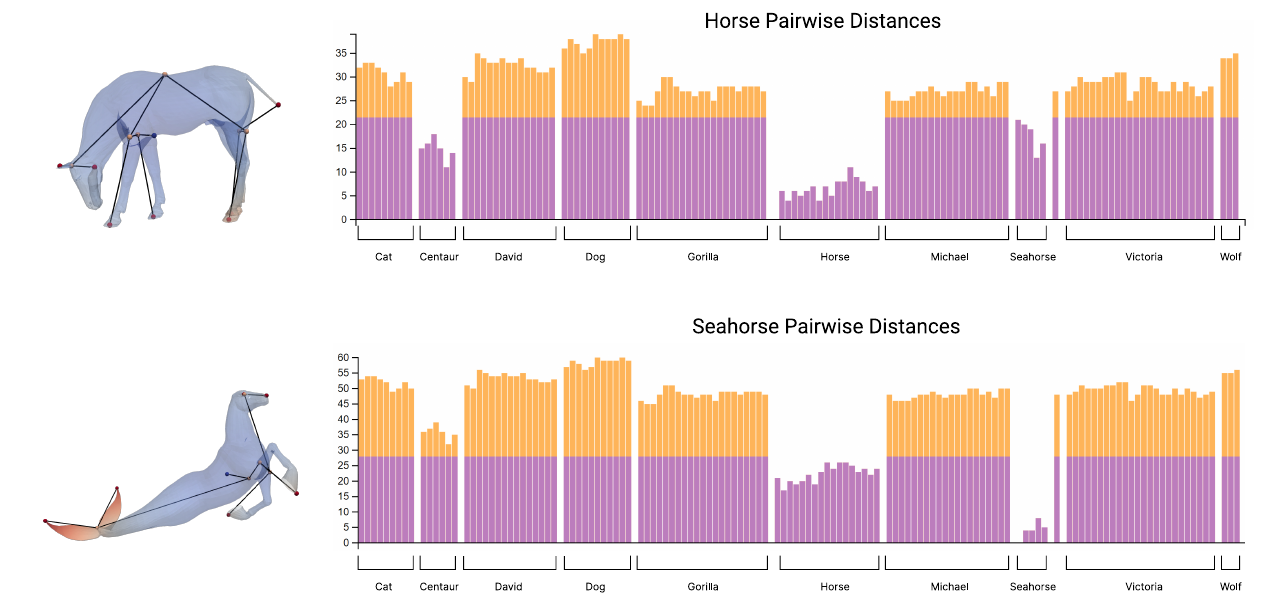}
    \caption{Comparison of a single horse pose and single seahorse pose to the rest of the shape dataset. We note that the horse pairs produce the same distance as the bottleneck distance. This is possibly due to few, if any, topological changes to get from one pose of the horse to another pose. Any of the topological changes must be outweighed non-movement relabels, insertions, or deletions.}
    \label{fig:shapeDiscriminativity}
\end{figure}
We took the TOSCA non-rigid world dataset, which contains a collection of shapes of animals and humans in different poses, and then we computed the average geodesic distance on each of them using the method suggested by Hilaga et al.~\cite{Hilaga2001}. We randomly sampled a subset of 100 vertices from each mesh, calculated the geodesic distance from every vertex to the subset and then took the average. Next we persistence simplified the data, by using a custom threshold for each mesh so that we get 14 nodes in the split tree. The largest persistence simplification value was 7.34 which implies that the distances we have computed are at most 3.67 above the true distance.

\subsubsection{Results}
We computed pairwise distance between 132 shapes, separating the 17424 distance computations into 12 batches. Each batch, with on-node parallelization from the 32 cores, took an of average 44.84 minutes.

\autoref{fig:shapesDistMatrix} shows the pairwise distances computations of our merge tree matching distance compared to the \textbf{merge tree edit distance} from Sridharamurthy et al.~\cite{Sridharamurthy2018}. We note that the distance matrix for merge tree edit distance produced here differs from the distance matrix produced in their original paper. This can be due to several reasons: 1) we used more vertices of the scalar field in order to compute the average geodesic distance, 2) the merge tree edit distance does not simplify the resulting scalar field, and 3) a difference in color scale.

We can expect a difference between our distance and the merge tree edit distance due to our distance being more stable as well as merge tree distance summing the values of the feature differences rather than taking the largest feature difference.

We would like to note that our distance and the merge tree edit distance produce similar global patterns. For example, comparisons to the seahorse produce relatively large distances and comparisons between humanoid shapes produce relatively low distances. In our distance matrix, we can see that there is a low distance for comparison between two of the same classes of shapes, regardless of the pose it takes. Another interesting point is the relationship between the centaur to other shapes. The shape has a similar distance to each of the other shapes, besides the seahorse and shark, which may be expected due to half of the centaur's shape being similar to each of the other shapes.

In \autoref{fig:shapeDiscriminativity}, we depict the bottleneck distance compared to our merge tree matching distance for a single horse and single seahorse to all other poses. We note that the only time that our distance achieves the bottleneck distance exactly is when we compare the horse to other horse poses, centaur poses, or seahorse. This is likely due to the large features being able to be matched to one another with little need for adjacency changes. The case is similar for the seahorse pose.
\label{sec:appx:shapeData}

\subsection{von Kármán Vortex Street}

We obtained a von Kármán Vortex Street dataset from \cite{Guenther17} and calculated the vorticity scalar field for a set of uniformly sampled timesteps. Then we persistence simplified each timestep so that the resulting merge would have exactly 14 nodes. Since we were only planning on obtaining 14 nodes, we decided to clip the vortex data so that more information from the clipped timestep could be obtained. Then we computed the pairwise distance using our distance between time-steps. The distance matrix is shown in \autoref{fig:vortexDistMatrix}. The largest persistence simplification value was 6.60, which implies that the distances we have computed are 3.30 above the true distance.

\subsubsection{Results}

As shown in \autoref{fig:vortexDistMatrix} (right) we can see that there is an initial time period where there are no vortices in the data. This is reflected in the upper left hand cluster of the distance matrix in \autoref{fig:vortexDistMatrix} (left) which shows low distance values between early consecutive timesteps. Once the vortices have started forming, we can see that their positions periodically alternate as the vortices move forward. The collection of alternating high and low values in the bottom right section of the distance matrix demonstrates this periodicity.

\subsection{Stability Testing}
To test the stability of our distance, we constructed a baseline scalar field $(\X_0,f_0)$ with three maxima, two saddles, and the global minimum. The two connected saddles are within $\e$ of each other -- making the scalar field horizontally $\e$-unstable.  We generated 36 new scalar fields by applying a random plane multiplied by a Gaussian, to simulate random noise. The bottleneck, merge tree matching, and $L^{\infty}$ distance was computed for each scalar field when compared to the baseline and plotted in \autoref{fig:teaser}. We found that, as desired, our distance lay between the bottleneck and $L^{\infty}$ distance. 

The perturbation was chosen with several specifics in mind in order to correctly mimic the case of horizontal stability. Let $\e = |\tilde{f}(x_2)-\tilde{f}(x_1)|$. The extrema $x_3,x_4,x_5$ are assigned function values such that each are more than $2\e$ greater than their connected saddle. Otherwise, the perturbation which we apply would create a new scalar field $(\X,\tilde{g})$ such that deletion of $e(x_2,x_4)$, inserting $e(y_2,x_4)$, and adjusting the function value of $x_1$ to be the function value of $y_1$ would be optimal. Furthermore, since we wanted to focus on horizontal instabilities in this experiment, we made sure that the difference between $x_5$ and $x_3$ is larger than $\e$. Otherwise, the perturbation would essentially just be ``reflecting'' the merge tree, i.e. mapping $x_3$ to $y_5$ and $x_5$ to $y_3$.

It is worth noting that if there was no topology change in the data and our perturbation was only causing differences in function value, then our merge tree matching distance would be equal to the bottleneck distance. Thus, this distance is sufficiently capturing a perturbation which changes the topology of the merge tree.

\begin{figure}[h]
    \centering
    \includegraphics[width=0.45\textwidth]{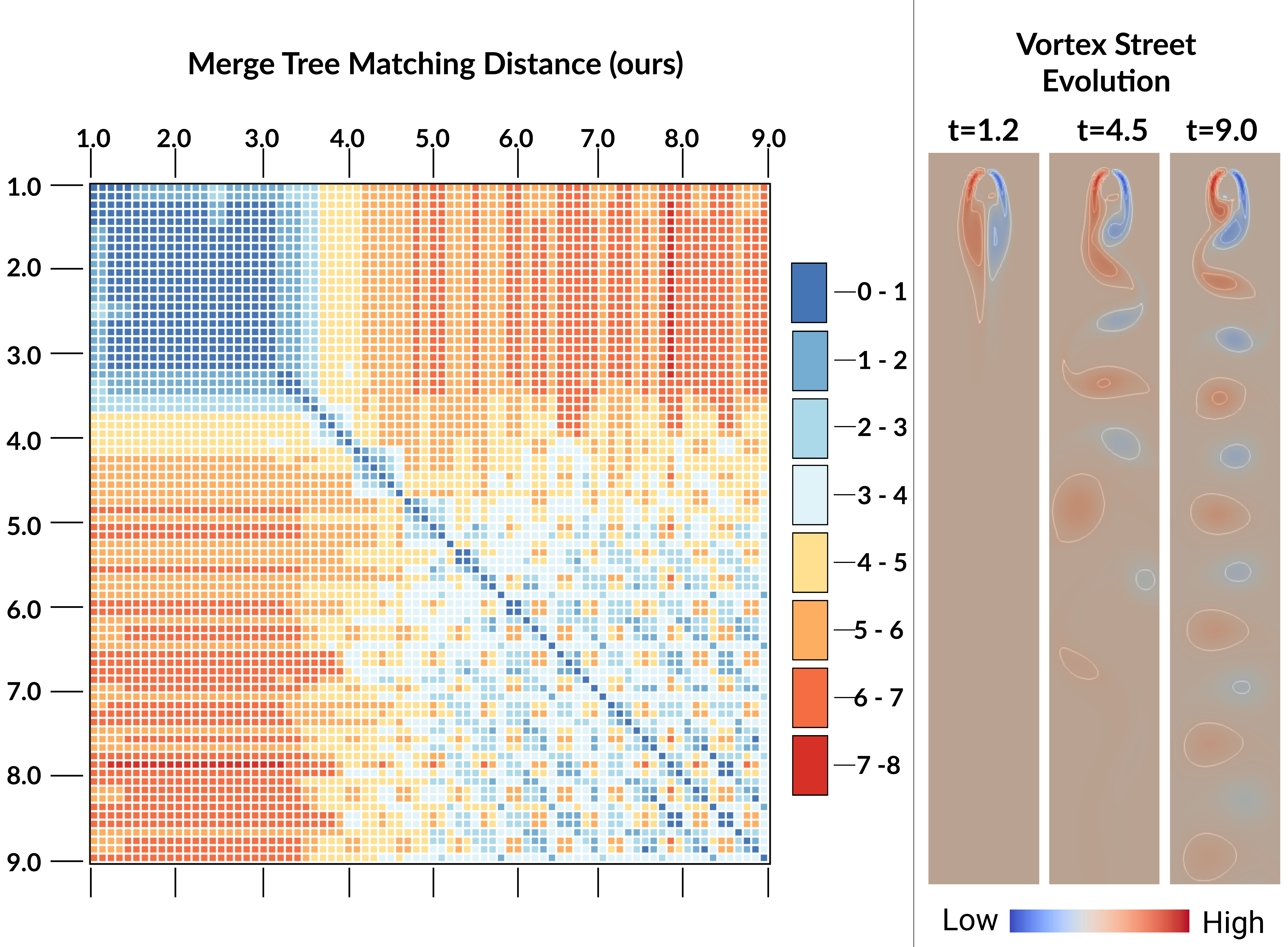}
    \caption{(Left)Pairwise distances between entries in 2-dimensional von Kármán vortex street. (Right) Three different timesteps of the von Kármán vortex street which depict the evolution of the vortices.}
    \label{fig:vortexDistMatrix}
\end{figure}

\section{Discussion}
\label{sec:discussion}

Here we have constructed a distance on merge trees which has experimentally been shown to be both stable and more discriminative than the bottleneck distance. Not only was our distance less than the $L^{\infty}$ distance during our experimentation, but it was specifically designed to identify and quantify scenarios where perturbations in the dataset may cause topological changes in the merge tree or pairing changes in the persistence diagram and branch decomposition trees.

\paragraph{Properties of a Metric} 
While we do not explicitly prove the triangle inequality and symmetry property, we would like to note that we verified that these properties hold on each of the datasets that we provided here.

\paragraph{Comparison to Beketayev Distance}
Beketayev et al. introduced a distance on merge trees which also computes and compares all branch decomposition trees to one another~\cite{Beketayev2014}. They are able to reduce the computation time of comparing all BDTs by not repeating comparisons of subtrees of specific BDTs. One fundamental difference that makes this possible is that once a node $x$ is paired to a node $y$, the children of $x$ must be mapped to the children of $y$, or be inserted/deleted. In our case, we cannot necessarily re-utilize comparisons of BDT subtrees since our nodes are always able to be mapped outside of any given subtree. This particular restriction of ancestor-descendant relationships is exactly what may cause horizontal instabilities while reducing the computation time. It is for a similar reason that a direct application of tree edit distance is unstable on merge trees.

\paragraph{Translation to Contour Trees and Reeb Graphs} 
When translating to contour trees and Reeb graphs, the A* algorithm would still be able to adequately match features of one graph-based descriptor to another. Furthermore, we can introduce additional pruning since these theoretical graph-based distances always have the restriction that we cannot match features of different types to one another (e.g. an up-leaf cannot be matched to a down-leaf in a contour tree or Reeb graph). The hurdle that we run into is that of properly encoding the features of these descriptors. There is some nuance on choosing the pairing in contour trees. For example, the path from the global min to the global max may be a non-monotone path. To the best of our knowledge, there has been no generalization of the BDT for Reeb graphs.

\paragraph{Scalability} We would like to make note that our algorithm still suffers from the issue of scalability. While a strength of our approach is that we can use persistence simplification to reduce the number of vertices while retaining accuracy, even small increases in the size of the merge tree may cause our computation time to increase in an exponential fashion (see \autoref{fig:computationTime}). Nevertheless, while analysis of small trees may be practical in some settings, we see developing a more efficient approach as an important, open challenge.

\acknowledgments{We thank Raghavendra Sridharamurthy and Vijay Natarajan for providing the comparison results of their algorithm~\cite{Sridharamurthy2018} on experiment used in \autoref{subsec:shapecompare}. We also thank our anonymous reviewers for provided their detailed feedback and suggestions. This work is supported in part by the U.S. Department of Energy, Office of Science, Office of Advanced Scientific Computing Research, under Award Number(s) DE-SC-0019039.}

\bibliographystyle{abbrv-doi}

\bibliography{bibFile}

\appendix 

\section{Appendix}
\label{sec:appx:algorithmDetails}

\subsection{A* Algorithm Heuristics and Pruning}
Below we have one pruning tactic and one heuristic function which we implemented into our algorithm. 
\paragraph{Checking Relabel Range and Validity:} When exploring all possible matchings, it is important to remove any possible matchings that lead to suboptimal results. We have two criteria which help prune the possible matches: 1) checking if $u$ and $v$ are close enough in function value so that relabeling them to one another is not more costly than deleting $u$ and inserting $v$ and 2) checking if there exists an ancestor of $u$ that will be a descendant of $v$. 

\begin{definition}
Let $u = (u_s,u_e) \in b_f$ and $v = (v_s,v_e) \in b_g$. Then, let $\delta = \frac12|u_e-u_s|$. We say that $v$ is in the \textbf{relabel range} of $u$ if $u_s - \delta \leq v_s \leq u_s + \delta$ and $u_e - \delta \leq v_e \leq u_e + \delta$.
\end{definition}

If $v$ is not in the relabel range of $u$ and $u$ is not in the relabel range of $v$, then the cost of $(u,v)$ is always greater than the cost of $\max\{c(u,\lambda),c(\lambda,v)\}$. Thus, $u$ should not be mapped to $v$ since this will always lead to a sub-optimal edit sequence.

Let $(u,v),(u',v') \in M$ such that $u$ is the parent of $u'$ and $v$ is a child of $v'$. This means that $u_s < u'_s$ and $v'_s < v_s$, implying that $|f(u_s) - g(v_s)| > |f(u'_s) - g(v_s)|,|f(u_s)-g(v'_s)|$. Thus, the range that the saddles traverse will always be greater than if we chose $(u,v')$ and $(u',v)$ as our pairs instead. Furthermore, since we iterate over all branch decomposition trees, we are guaranteed that each $u_s,u'_s,v_s,v'_s$ are paired with extrema which coincide with minimizing this cost. Thus, if $(u,v) \in M$ with $u'$ being an ancestor of $u$ and $v'$ being a descendant of $v$, we do not allow $(u',v')$ in our matching.

\paragraph{Size difference heuristic}
Let $M'$ be a partial matching between branch decomposition trees $b_f$ and $b_g$. Suppose $U$ and $V$ are the unmatched nodes of $b_f$ and $b_g$. Without loss of generality, suppose $n = |U| - |V| > 0$. This means that in the matching, at least $n$ nodes must be deleted from $b_f$. We can lower bound the actual cost by computing the cost of deleting the $n^{th}$ smallest node from $b_f$. Since this is a lower bound to the true cost of the full matching, we are still guaranteed that this heuristic will be viable for the A* algorithm to reach the optimal value. We use this function as our heuristic function $h(M')$ in the A* algorithm.

\subsection{Zigzag Diagram Example}
\label{sec:appx:zigzagExample}

Figure ~\ref{fig:zigzag} depicts an example of a zigzag diagram. We encode function value using height for the merge trees. Since there is no function values associated on the connecting spaces, the vertical position of the nodes of the connecting spaces do not encode function value unlike the merge trees above them. Each value $x_i$ maps to $y_i$ under both quotient maps $q_{i,i}:X_i \to \MM_i$ and $q_{i,i+1}:X_i\to\MM_{i+1}$. Color in the connecting spaces indicate points which belong to the same sequence. For example, $\{x_3,x_3,x_3,x_3,x_3\},\{x_3,x_4,x_4,x_4,x_4\},\{x_1,x_1,x_1,x_1,x_1\},\{x_1,x_1,x_1,x_1,x_8\}$ are all valid sequences. The spread of each of these sequences is then the range of the associated function values in the sequence of merge trees. For example, $\text{spread}(\{x_3,x_3,x_3,x_3,x_3\}) = |f_1(y_3) - f_3(y_3)|$.

Note that each of the connecting spaces have the edge $e(x_8,x_9)$, which only appears in $\MM_4,\MM_5,\MM_6$. This represents a leaf which is inserted on $\MM_4$. In the previous merge trees, this edge is contracted to a single point and assigned the function value of half its length. This half is chosen to optimize the distance that both its extrema and minima travel.

\begin{figure*}
    \centering
    \includegraphics[width=0.95\textwidth]{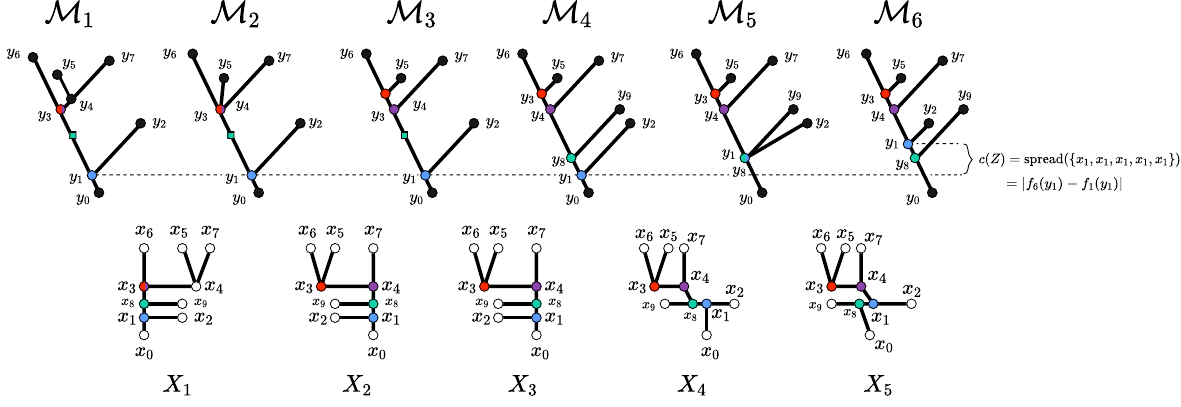}
    \caption{Zigzag diagram carrying a source merge tree $\MM_1$ to a target merge tree $\MM_6$. The connecting spaces shown below, between the two merge trees in which their quotient maps map to. Color indicates that these points all belong to the same sequence. The cost of this zigzag diagram $Z$ is the largest spread over all possible sequences, which is attained by the sequence $\{x_1,x_1,x_1,x_1,x_1\}$.}
    \label{fig:zigzag}
\end{figure*}

\subsection{A* Computation}
Below is pseudocode for the A* computation. Note that the priority queue $Q$ is ordered based on the maximum value between the current cost and the heuristic function, but the current cost is still maintained as a separate value. 
\begin{algorithm}
\caption{aStar}\label{alg:aStar}
\hspace*{\algorithmicindent} \textbf{Input:} Two BDTs $b_f,b_g$, and cuttoff value $\e$ \\
\hspace*{\algorithmicindent} \textbf{Output:} cost of best matching between $b_f,b_g$
\begin{algorithmic}[1]
\State $Q$ = empty priority queue, $U = V(b_f)$,$V = V(b_g)$, $M$ = $\{(r1$,$r2)\}$
\State $Q$.push$\big((\max\{c(M'),h(M')\},c(M),M)\big)$
\While{$|Q| > 0$}
    \State $approxCost$, $c(M)$, $M$ = $Q$.pop()
    \If {$c(M) > \e$}
        \State \Return 0, \texttt{false}
    \EndIf
    \If{$|U| > 0$} \Comment{Pick a $u$ and match to all possible $v$ and $\lambda$}
        \State $u$ = $U$.pop()
        \State $M' = \{M\cup(u,\lambda)\}$
        \State $Q$.push$\big((\max\{c(M'),h(M')\},c(M'),M')\big)$
        \For{$v \in V$}
            \State $M' = \{M\cup(u,v)\}$
            \State $Q$.push$\big((\max\{c(M'),h(M')\},c(M'),M')\big)$
        \EndFor
    \ElsIf{$|V| > 0$} \Comment{Match leftover elements from $v$ to $\lambda$}
        \For{$v \in V$}
            \State $M' = \{M\cup(\lambda,v)\}$
            \State $Q$.push$\big((\max\{c(M'),h(M')\},c(M'),M')\big)$
        \EndFor
    \Else
        \State $c(M)$ = computeFinalCost($M$) \Comment{See Sec. 5.3}
        \If{$c(M)\leq Q$[0]}
            \If {$c(M) \leq \e$}
                \State \Return $c(M)$,\texttt{true}
            \Else
                \State \Return 0, \texttt{false}
            \EndIf
        \Else
            \State $Q$.push$\big((\max\{c(M'),h(M')\},c(M),M)\big)$
        \EndIf
    \EndIf
\EndWhile
\end{algorithmic}
\end{algorithm}

\end{document}